\newtheorem{theorem}{Theorem}[section]
\newtheorem{lemma}[theorem]{Lemma}
\newtheorem{corollary}[theorem]{Corollary}
\newtheorem{definition}[theorem]{Definition}
\newtheorem{proposition}[theorem]{Proposition}
\newcommand\C{{\mathbb{C}}}
\newcommand\F{{\mathbb{F}}}
\newcommand\irr{{\mathop\textup{Irr}}}
\newcommand\remove[1]{{}}
\newcommand{\iso}{\cong}
\title{Fast generalized DFTs for all finite
  groups}
\author{Chris Umans\thanks{Supported by NSF grant CCF-1815607 and a Simons Foundation Investigator grant.}\\ Caltech}
\begin{document}
\maketitle

\date
\abstract{For any finite group $G$, we give an arithmetic algorithm to compute generalized Discrete Fourier Transforms
  (DFTs) with respect to $G$, using $O(|G|^{\omega/2 + \epsilon})$
  operations, for any $\epsilon > 0$. Here, $\omega$ is
  the exponent of matrix multiplication.}

\section{Introduction}

For a finite group $G$, let $\irr(G)$ denote a complete set of
irreducible representations of $G$. A {\em generalized DFT with
  respect to $G$} is a map from a group algebra element $\alpha \in \C[G]$ (which is
a vector of $|G|$ complex numbers), to the following linear
combination of irreducible representations:
\[\sum_{g \in G} \alpha_g \bigoplus_{\rho \in \irr(G)} \rho(g).\]
It is unique once one fixes a basis for each $\rho$; one usually seeks
algorithms that
work for arbitrary chosen bases. 
We typically speak of the complexity of computing this map in the (non-uniform) arithmetic
circuit model and do not concern ourselves with {\em finding} the
irreducible representations. The trivial algorithm thus requires $O(|G|^2)$
operations, since we are summing $|G|$ block-diagonal matrices,
each with $|G|$ entries in the blocks.

Fast algorithms for the DFT with respect to cyclic groups are
well-known and are attributed to Cooley and Tukey in 1965 \cite{CT}, although the
ideas likely date to Gauss. Beth in 1984 \cite{beth}, together with
Clausen \cite{clausenfast},
initiated the study of generalized DFTs, the ``generalized''
terminology signalling that the underlying group may be any group. A
central goal since that time has been to obtain fast algorithms for
generalized DFTs with respect to arbitary underlying groups. One may hope
for ``nearly-linear'' time algorithms, meaning that they use a number
of operations that is upper-bounded by $c_\epsilon|G|^{1 + \epsilon}$ for 
 universal constants $c_\epsilon$ and arbitary $\epsilon > 0$. Such ``exponent
 one'' algorithms are known for certain families of groups: abelian
 groups, supersolvable groups \cite{Baum}, and symmetric and alternating
 groups \cite{clausenfast}. Algorithms for generalized DFTs often find
 themselves manipulating matrices, so it is not surprising that they
 require a number of operations that depends on $\omega$, the exponent
 of matrix multiplication. Thus we view algorithms that achieve exponent
 one conditioned on $\omega = 2$ as being ``nearly as good'' as unconditional
 exponent one algorithms. Such algorithms are known for solvable
 groups \cite{beth, CB}, and with the recent breakthrough of \cite{HU18}, for
 linear groups; these algorithms achieve exponent $\omega/2$. 

In this paper we realize the main goal of the area, obtaining exponent
$\omega/2$ for all finite groups $G$. The previous best exponent that
applies to all finite groups was obtained by \cite{HU18}; it depends in
a somewhat complicated way on $\omega$, but it is at best $\sqrt{2}$
(when $\omega = 2$). Before that, the best known exponent was $1 +
\omega/4$ (which is at best $3/2$ when $\omega =2$),
and this dates back to the original work of Beth and Clausen.

\subsection{Past and related work}

A good description of past work in this area can be found in Section 13.5
of \cite{BCS}. The first algorithm generalizing beyond the
abelian case is due to Beth in 1984 \cite{beth}; this algorithm is
described in Section \ref{sec:single-subgroup} in a form often
credited jointly to Beth and Clausen. Three other milestones are the
$O(|G|\log|G|)$ algorithm for supersolvable groups due to Baum
\cite{Baum}, the $O(|G|\log^3 |G|)$ algorithm for the symmetric group
due to Clausen \cite{clausenfast} (see also \cite{Maslen} for a recent
improvement), and the $O(|G|^{\omega/2 +
  \epsilon})$ algorithms for linear groups obtained by Hsu and Umans,
which is described in Section \ref{sec:double-subgroup}. Wreath products were studied by Rockmore \cite{wreath} who obtained exponent one algorithms in certain cases. 

In the 1990s, Maslen, Rockmore, and coauthors developed the so-called ``separation
of variables'' approach \cite{MRsurvey2}, which relies on non-trivial decompositions along
chains of subgroups via {\em Bratteli diagrams} and detailed
knowledge of the representation theory of the underlying groups. There
is a rather large body of literature on this approach and it has been
applied to a wide variety of group algebras and more general algebraic
objects. For a fuller description of this approach and the results
obtained, the reader is referred to the surveys \cite{MRsurvey, Rrecent}, and the most
recent paper in this line of work \cite{MRW}.

\section{Preliminaries} 

Throughout this paper we will use the phrase 

\smallskip
\centerline{``generalized DFTs with respect to $G$ can be computed using $O(|G|^{\alpha + \epsilon})$ operations, for all
$\epsilon > 0$''} 
\smallskip

\noindent where $G$ is a finite group and $\alpha \ge 1$ is a
real number. We mean by this that
there are {\em universal} constants $c_\epsilon$ independent of the
group $G$ under consideration so that for each $\epsilon > 0$, the
operation count is at most $c_\epsilon|G|^{\alpha + \epsilon}$. Such
an algorithm will be referred to as an ``exponent $\alpha$''
algorithm. This comports with the precise definition of the exponent
of matrix multiplication, $\omega$: that there are universal
constants $b_\epsilon$ for which $n \times n$ matrix
multiplication can be performed using at most $b_\epsilon n^{\omega +
  \epsilon}$ operations, for each $\epsilon > 0$. Indeed we will often
report our algorithms' operation counts in terms of $\omega$. In such
cases matrix multiplication is always used as a black box, so,
for example, an
operation count of $O(|G|^{\omega/2})$ should be interpreted to mean: if one uses a fast matrix multiplication algorithm with exponent $\alpha$
(which may range from $2$ to $3$), then the operation count is
$O(|G|^{\alpha/2}$). In particular, in real implementations, one
might well use standard matrix multiplication and plug in $3$ for
$\omega$ in the operation count bound.  

We use $\irr(G)$ to denote the complete set
of irreducible representations of $G$ being used for the DFT at
hand. In the presentation to follow, we assume the underlying field is
$\C$; however our algorithms work over any field $\F_{p^k}$ whose
characteristic $p$ does not
divide the order of the group, and for which $k$ is sufficiently large
for $\F_{p^k}$ to represent a complete set of irreducibles. 

We use $I_n$ to denote the $n \times n$ identity matrix. The following
is an important general observation (see, e.g., Lemma
4.3.1 in \cite{HJ}):  
\begin{proposition}
\label{prop:mat-mult}
If $A$ is an $n_1 \times n_2$ matrix, $B$ is an $n_2 \times n_3$ matrix, and $C$ is an
$n_3 \times n_4$ matrix, then the entries of 
the product matrix $ABC$ are exactly the entries of the vector
obtained by multiplying $A \otimes C^T$
(which is an $n_1n_4 \times n_2n_3$ matrix) by $B$ viewed as an
$n_2n_3$-vector, and denoted $\mbox{vec}(B)$.
\end{proposition}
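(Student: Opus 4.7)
The plan is a direct index-level verification. The claim is essentially the well-known identity $\operatorname{vec}(ABC) = (A \otimes C^T)\operatorname{vec}(B)$ under a suitable convention for $\operatorname{vec}$, so the proof amounts to unpacking both sides and matching coefficients. I would not attempt any clever structural argument, since the statement is purely combinatorial on indices.

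First, I would fix notation for $\operatorname{vec}$: given an $n_2 \times n_3$ matrix $B$, write $\operatorname{vec}(B)$ as the $n_2 n_3$-vector whose $(j,k)$-entry is $B_{jk}$, with pairs $(j,k)$ ordered lexicographically (this is the convention that makes $A \otimes C^T$ the right operator; the other convention would give $C^T \otimes A$). Similarly, I would index the rows of $A \otimes C^T$ by pairs $(i,l)$ with $1 \le i \le n_1$ and $1 \le l \le n_4$, and its columns by pairs $(j,k)$ with $1 \le j \le n_2$ and $1 \le k \le n_3$. By definition of the Kronecker product,
\[
(A \otimes C^T)_{(i,l),(j,k)} \;=\; A_{ij} \cdot (C^T)_{lk} \;=\; A_{ij} C_{kl}.
\]

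Next, I would compute the two sides. On one hand, from the definition of matrix multiplication,
\[
(ABC)_{il} \;=\; \sum_{j=1}^{n_2} \sum_{k=1}^{n_3} A_{ij} B_{jk} C_{kl}.
\]
On the other hand, the $(i,l)$-entry of $(A \otimes C^T)\operatorname{vec}(B)$ is
\[
\sum_{j,k} (A \otimes C^T)_{(i,l),(j,k)} \, \operatorname{vec}(B)_{(j,k)} \;=\; \sum_{j,k} A_{ij} C_{kl} B_{jk},
\]
which is exactly the expression above. Thus the two vectors agree entry by entry.

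There is no real obstacle here; the only thing to get right is the pair-ordering convention for $\operatorname{vec}$ and for the rows/columns of $A \otimes C^T$, which must be chosen consistently so that the Kronecker factor $A$ corresponds to the $j$-index of $B$ and the factor $C^T$ corresponds to the $k$-index. Once this convention is fixed, the identity reduces to a one-line reindexing of the triple sum defining $ABC$.
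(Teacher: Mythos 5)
Your proof is correct, and it is the standard argument; the paper itself does not prove this proposition but simply cites it (Lemma~4.3.1 of Horn and Johnson), so there is no in-paper proof to compare against. Your direct index verification, with the crucial observation that the row-major (lexicographic) convention for $\operatorname{vec}$ is the one that pairs with $A \otimes C^T$ rather than the column-stacking convention (which would give $C^T \otimes A$), is exactly the right way to establish the identity $\operatorname{vec}(ABC) = (A \otimes C^T)\operatorname{vec}(B)$. The one-line reindexing of the triple sum $\sum_{j,k} A_{ij}B_{jk}C_{kl}$ is complete and clean; no gaps.
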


\subsection{Basic representation theory}
A {\em representation} of group $G$ is a homomorphism $\rho$ from $G$ into the
group of invertible $d \times d$ matrices. Representation $\rho$ naturally
specifies an action of $G$ on $\C^d$; representation $\rho$ is thus
said to have {\em dimension} $\dim(\rho) = d$.  A representation is
{\em irreducible} if the action on $\C^d$ has no $G$-invariant
subspace. Two representations of
the same dimension $d$,
$\rho_1$ and $\rho_2$, are
{\em equivalent} (written $\rho_1 \iso \rho_2$) if they are the same up to a change of basis; i.e.,
$\rho_1(g) = T\rho_2(g)T^{-1}$ for some invertible $d \times d$ matrix
$T$. The classical Maschke's Theorem implies that every
representation $\rho_0$ of $G$ {\em breaks up} into the direct sum of irreducible
representations; i.e. there is an invertible matrix $T$ and a multiset
$S \subseteq \irr(G)$, for which
\[T\rho_0(g)T^{-1} = \bigoplus_{\rho \in S} \rho(g).\]

Given a subgroup $H \subseteq G$ one can obtain from any
representation $\rho \in \irr(G)$ a representation 
$\mbox{Res}^G_H(\rho)$ (the {\em restriction} of $\rho$ to $H$), which
is a representation of $H$, simply by restricting the domain of $\rho$
to $H$.  One can also obtain from any representation $\sigma \in
\irr(H)$, a representation of $G$ called the {\em induced}
representation $\mbox{Ind}^{G}_H(\rho)$, which has dimension
$\dim(\sigma)|G|/|H|$. We will not need to work directly with induced
representations, but we will use a fundamental fact called {\em
  Frobenius reciprocity}. Given $\rho \in \irr(G)$ and $\sigma \in
\irr(H)$, Frobenius reciprocity states that the number of times
$\sigma$ appears in the restriction $\mbox{Res}^G_H(\rho)$ equals the
number of times $\rho$ appears in the induced representation
$\mbox{Ind}^G_H(\sigma)$.  

A basic fact is that $\sum_{\rho \in \irr(G)} \dim(\rho)^2 = |G|$,
which implies that for all $\rho \in \irr(G)$, we have $\dim(\rho) \le
|G|^{1/2}$. This can be used to prove the following
inequality, which we use repeatedly:
\begin{proposition}
\label{prop:prelim-inequality}
For any real number $\alpha \ge 2$, we have
\[\sum_{\rho \in \irr(G)} \dim(\rho)^{\alpha} \le |G|^{\alpha/2}.\]
\end{proposition}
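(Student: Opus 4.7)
The plan is to reduce everything to the two facts highlighted just before the statement: the identity $\sum_{\rho \in \irr(G)} \dim(\rho)^2 = |G|$, and the individual bound $\dim(\rho) \le |G|^{1/2}$ for every $\rho \in \irr(G)$. Since $\alpha \ge 2$, the exponent $\alpha - 2$ is nonnegative, so we can pull out the ``extra'' factor of $\dim(\rho)^{\alpha-2}$ and bound it uniformly.

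Concretely, first I would write
\[
\dim(\rho)^{\alpha} \;=\; \dim(\rho)^{\alpha - 2} \cdot \dim(\rho)^{2}
\]
for each irreducible $\rho$. Applying the pointwise bound $\dim(\rho) \le |G|^{1/2}$ to the first factor (which is legal because $\alpha - 2 \ge 0$), this is at most $|G|^{(\alpha-2)/2} \dim(\rho)^{2}$. Then summing over $\rho \in \irr(G)$ and pulling the constant $|G|^{(\alpha-2)/2}$ outside gives
\[
\sum_{\rho \in \irr(G)} \dim(\rho)^{\alpha} \;\le\; |G|^{(\alpha-2)/2} \sum_{\rho \in \irr(G)} \dim(\rho)^{2} \;=\; |G|^{(\alpha-2)/2} \cdot |G| \;=\; |G|^{\alpha/2},
\]
where the middle equality uses the sum-of-squares identity for $|G|$.

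There is essentially no obstacle here; the only thing to be careful about is that the step $\dim(\rho)^{\alpha-2} \le |G|^{(\alpha-2)/2}$ requires $\alpha \ge 2$, which is exactly the hypothesis. If one wanted a slicker viewpoint, this is the monotonicity of $\ell^p$-norms (with counting measure weighted by $\dim(\rho)^2/|G|$) applied to the vector $(\dim(\rho))_{\rho}$, but the direct two-line computation above suffices.
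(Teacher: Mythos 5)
Your proof is correct and follows essentially the same approach as the paper's: factor $\dim(\rho)^\alpha = \dim(\rho)^{\alpha-2}\dim(\rho)^2$, bound the first factor uniformly using $\dim(\rho) \le |G|^{1/2}$ (the paper routes this through $\dim(\rho_{\max})$ as an intermediate, but it is the same estimate), and finish with $\sum_\rho \dim(\rho)^2 = |G|$. No meaningful difference.
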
 
\begin{proof}
Set $\rho_{\max}$ to be an irrep of largest dimension. We have
\[\sum_{\rho \in \irr(G)} \dim(\rho)^{\alpha} \le 
\dim(\rho_{\max})^{\alpha - 2}\sum_{\rho \in \irr(G)} \dim(\rho)^2 =
\dim(\rho_{\max})^{\alpha - 2}|G| \le  |G|^{\alpha/2},\]
where the last inequality used the fact that $\dim(\rho_{\max}) \le |G|^{1/2}$.
\end{proof}

\subsection{Basic Clifford theory}
\label{sec:clifford}

Clifford theory describes the way the irreducible representations of a
group $H$ break up when restricted to a {\em normal} subgroup $N$,
which is a particularly well-structured and well-understood scenario. 

Elements of $H$ act on the set $\irr(N)$ as follows:\[(h\cdot \lambda)(n) =
\lambda(hnh^{-1}), \]
 for  $\lambda \in \irr(N)$. 
Let ${\cal O}_1, \ldots, {\cal O}_\ell$ be the orbits of this $H$-action
on $\irr(N)$. Clifford theory states for each $\sigma \in \irr(H)$,
there is a positive integer $e_\sigma$ and an index $i_\sigma$ for which 
the restriction $\mbox{Res}_N^H(\sigma)$ is equivalent to 
\[e_\sigma \bigoplus_{\lambda \in {\cal O}_{i_{\sigma}} } \lambda.\] 
In particular, this implies that all $\lambda \in \irr(N)$ that occur in the
restriction have the same dimension, $d_\sigma$, and multiplicity, $e_\sigma$, and that
$\dim(\sigma) = d_\sigma e_\sigma|{\cal O}_{i_\sigma}|$.

We can also define the following subsets, which partition $\irr(H)$:
\[S_\ell = \{\sigma \in \irr(H) : \mbox{the irreps in ${\cal O}_\ell$ occur
  in $\sigma$}\} = \{\sigma \in \irr(H) : i_\sigma = \ell\}.\]



We will need the following proposition:
\begin{proposition}
\label{prop:equality}
For a finite group $H$ and normal subgroup $N$, and sets $S_\ell$ as
defined above, the following holds for each $\ell$:
\[\sum_{\sigma \in S_\ell} \dim(\sigma)e_\sigma/d_\sigma =
  |H/N|.\] 
\end{proposition}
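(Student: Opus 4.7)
The plan is to apply Frobenius reciprocity to a single irreducible representation $\lambda$ of $N$ lying in the orbit $\mathcal{O}_\ell$, and then compare dimensions.

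First I would fix an arbitrary $\lambda \in \mathcal{O}_\ell$, and let $d$ denote its dimension. Since all irreps in $\mathcal{O}_\ell$ are conjugate under the $H$-action, they all have dimension $d$; moreover, by the Clifford decomposition stated in the excerpt, for any $\sigma \in S_\ell$ the common value $d_\sigma$ equals exactly this $d$. The induced representation $\ind^H_N(\lambda)$ has dimension $d \cdot |H/N|$.

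Next, I would decompose $\ind^H_N(\lambda)$ into irreducibles of $H$. By Frobenius reciprocity, the multiplicity of $\sigma \in \irr(H)$ in $\ind^H_N(\lambda)$ equals the multiplicity of $\lambda$ in $\res^H_N(\sigma)$. The Clifford statement in the excerpt tells us this multiplicity is $e_\sigma$ when $\sigma \in S_\ell$ (since $\lambda \in \mathcal{O}_\ell = \mathcal{O}_{i_\sigma}$) and $0$ otherwise. Therefore
\[
\ind^H_N(\lambda) \;\iso\; \bigoplus_{\sigma \in S_\ell} e_\sigma \,\sigma.
\]

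Finally, equating dimensions on both sides gives
\[
d \cdot |H/N| \;=\; \sum_{\sigma \in S_\ell} e_\sigma \dim(\sigma),
\]
and dividing both sides by $d = d_\sigma$ yields the desired identity. The only point that requires care is verifying that $d_\sigma$ is the same value $d = \dim(\lambda)$ for every $\sigma \in S_\ell$, but this is immediate from the Clifford decomposition together with the fact that all irreps within a single orbit $\mathcal{O}_\ell$ share a common dimension (being mutual conjugates of $\lambda$). No step is a real obstacle; the proof is essentially just bookkeeping on top of Frobenius reciprocity.
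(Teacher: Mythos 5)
Your proof is correct and follows essentially the same route as the paper: fix $\lambda\in\mathcal O_\ell$, equate dimensions of $\ind_N^H(\lambda)$, and use Frobenius reciprocity to identify the multiplicities as $e_\sigma$ for $\sigma\in S_\ell$ and $0$ otherwise. Your version is slightly more explicit in spelling out that $d_\sigma=\dim(\lambda)$ is constant on $S_\ell$, but the substance is identical.
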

\begin{proof}
Fix $\lambda \in {\cal O}_\ell$, and note that the induced representation  $\mbox{Ind}_N^H(\lambda)$
has dimension $\dim(\lambda)|H/N|$.  Let $m_{\sigma, \lambda}$ be the
number of times $\sigma \in \irr(H)$ occurs in $\mbox{Ind}_N^H(\lambda)$. Then we have
\[\sum_{\sigma \in \irr(H)} \dim(\sigma)m_{\sigma, \lambda} =
  \dim(\lambda)|H/N|.\] 
By Frobenius reciprocity, $m_{\sigma, \lambda}$ equals the
number times $\lambda$ occurs in  $\mbox{Res}_N^H(\sigma)$. Thus the summand $\dim(\sigma)m_{\sigma,
  \lambda}$ equals $\dim(\sigma)e_\sigma$, whenever
$m_{\sigma, \lambda} \ne 0$ (and zero otherwise). The proposition follows. 
\end{proof}


\subsection{Generalized DFTs and inverse generalized DFTs}

We assume by default that we
are computing generalized DFTs with respect to an arbitary chosen basis for each $\rho \in
\irr(G)$. Sometimes we need to refer to the special basis in the
following definition:
\begin{definition}
\label{def:h-adapted}
Let $H$ be a subgroup of $G$. An {\em $H$-adapted basis} is a basis
for each $\rho \in \irr(G)$, so that the restriction of $\rho$ to $H$ respects the direct
sum decomposition into irreps of $H$.
\end{definition}
In concrete terms, this implies that for each $\rho \in \irr(G)$, while for general $g \in G$,
$\rho(g)$ is a $\dim(\rho) \times \dim(\rho)$ matrix, for $g \in H$,
$\rho(g)$ is a block-diagonal matrix with block sizes coming from the
set $\{\dim(\sigma): \sigma \in \irr(H)\}$. An $H$-adapted basis
always exists.

A general trick that we will rely on is that if one can compute generalized
DFTs with respect to $G$ for an input $\alpha$ supported on a subset $S \subseteq G$, then with an
additional multiplicative factor of roughly $|G|/|S|$, one can
compute generalized DFTs with respect to $G$. 

\begin{theorem}
\label{thm:trick}
Fix a finite group $G$ and a subset $S \subseteq G$, and suppose a
generalized DFT with respect to $G$ can be
computed in $m$ operations, for inputs $\alpha$ supported on $S$. Then
generalized DFTs with respect to $G$ can be computed using \[O(m + |G|^{\omega/2 + \epsilon})\cdot \frac{|G|
  \log |G|}{|S|}\] operations,
for any $\epsilon > 0$. 
\end{theorem}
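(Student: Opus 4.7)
The plan is to use linearity of the DFT: decompose an arbitrary input $\alpha \in \C[G]$ into $O((|G|/|S|)\log|G|)$ pieces, each of which (after a left-translate) is supported on $S$, apply the hypothesized algorithm to each piece, and recombine.

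First I would establish by a probabilistic covering argument that $G$ can be covered by $k = O((|G|/|S|)\log|G|)$ left translates $g_1S, g_2S, \ldots, g_kS$. With $g_1, \ldots, g_k$ chosen uniformly at random, any fixed $h \in G$ lies in $g_iS$ (equivalently $g_i \in hS^{-1}$) with probability exactly $|S|/|G|$, so a union bound over $h \in G$ shows that such a covering exists for $k$ of this size. Greedily assigning each element $h \in G$ to the first translate that contains it yields a partition $G = T_1 \sqcup \cdots \sqcup T_k$ with $T_i \subseteq g_iS$.

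With the partition in hand I would write $\alpha = \sum_{i=1}^k \alpha^{(i)}$ where $\alpha^{(i)}$ agrees with $\alpha$ on $T_i$ and vanishes elsewhere, and introduce $\beta^{(i)}$ supported on $S$ by $\beta^{(i)}_s = \alpha^{(i)}_{g_is}$. The identity
\[\sum_{g \in G}\alpha^{(i)}_g \rho(g) \;=\; \rho(g_i)\sum_{s \in S}\beta^{(i)}_s \rho(s) \qquad \text{for all } \rho \in \irr(G)\]
reduces the DFT of $\alpha^{(i)}$ to (i) the hypothesized DFT of $\beta^{(i)}$, costing $m$ operations, followed by (ii) left-multiplying each block by the hardwired matrix $\rho(g_i)$. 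The total per-$i$ cost of step (ii) is $\sum_{\rho \in \irr(G)} O(\dim(\rho)^{\omega + \epsilon'}) = O(|G|^{\omega/2 + \epsilon})$ by Proposition~\ref{prop:prelim-inequality}. Repeating for each $i$ and then summing the $k$ block-diagonal outputs (an $O(k|G|)$ aggregation cost that is absorbed because $\omega \ge 2$) gives the claimed total operation count.

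The main obstacle is the covering step: for general $S$ one cannot partition $G$ into $|G|/|S|$ exact left-translates of $S$ (this succeeds only when $S$ is a coset of a subgroup), so the logarithmic overhead in $k$ is intrinsic to this approach and is precisely the source of the $\log|G|$ factor in the statement. Everything else is a routine application of linearity, the translation identity, and the dimension bound of Proposition~\ref{prop:prelim-inequality}.
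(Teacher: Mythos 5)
Your proof is correct and follows essentially the same route as the paper: a probabilistic covering of $G$ by $O((|G|/|S|)\log|G|)$ translates of $S$, computation of the hypothesized $S$-supported DFT on each translated piece, a block-diagonal multiplication by $\bigoplus_\rho \rho(g_i)$ at cost $O(|G|^{\omega/2+\epsilon})$ via Proposition~\ref{prop:prelim-inequality}, and a final summation. The only cosmetic difference is that you use left translates $g_iS$ and pre-multiply, whereas the paper uses right translates $Sg$ and post-multiplies; the argument is otherwise identical.
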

\begin{proof}
First observe that by multiplying by $\oplus_{\rho \in
  \irr(G)} \rho(g)$ we can compute a generalized DFT supported on
$Sg$, for an additive extra cost of \[\sum_{\rho \in \irr(G)}
O(\dim(\rho)^{\omega + \epsilon})\] operations, for all $\epsilon > 0$, and by applying
Proposition \ref{prop:prelim-inequality} with $\alpha = \omega +
\epsilon$ this is at most $O(|G|^{\omega/2 + \epsilon})$.
A probablistic argument shows that $|G|\log
|G|/|S|$ different translates $g$ suffice to cover $G$, so we need only repeat the
DFT supported on $Sg$ translated by each such $g$, and sum the
resulting DFTs. 
\end{proof}

The {\em inverse generalized DFT} maps a collection of matrices $M^{\sigma} \in
\C^{\dim(\sigma) \times \dim(\sigma)}$, one for each $\sigma \in
\irr(G)$, to the vector $\alpha$ for which 
\[\sum_{g \in G}\alpha_g \bigoplus_{\sigma \in \irr(G)} \rho(G) =
  \bigoplus_{\sigma \in \irr(G)} M^{\sigma}.\] 
In the arithmetic circuit model, the inverse DFT can be computed
efficiently if the DFT can:
\begin{theorem}[Baum, Clausen; Cor. 13.40 in \cite{BCS}]
Fix a generalized DFT with respect to finite group $G$ and suppose it
can be computed in $m$ operations. Then the
inverse DFT with respect to $G$ (and the same basis), can be computed
in at most $m + |G|$ operations. 
\end{theorem}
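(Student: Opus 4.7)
The plan is to reduce the inverse DFT to (a mild decoration of) the transpose of the forward DFT, and then invoke the transposition principle for linear straight-line programs.

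The starting point is the Fourier inversion formula: for any choice of basis for each irreducible, and any $\alpha \in \C[G]$,
\[\alpha_g \;=\; \frac{1}{|G|} \sum_{\rho \in \irr(G)} \dim(\rho)\, \Tr\bigl(\rho(g^{-1})\, \hat\alpha(\rho)\bigr),\]
where $\hat\alpha(\rho) = \sum_h \alpha_h \rho(h)$. Although this formula is usually stated for unitary representations, it holds verbatim in any basis, because if $\rho' = T\rho T^{-1}$ then $\hat\alpha$ transforms by the same conjugation and the trace is invariant. Let $D : \C^{|G|} \to \C^{|G|}$ denote the linear map realized by the forward DFT; both input and output have dimension $|G|$ since $\sum_{\rho} \dim(\rho)^2 = |G|$, and the matrix entries of $D$ are $D_{(\rho,i,j),\,g} = \rho(g)_{i,j}$. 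Expanding the trace in the inversion formula gives
\[\alpha_g \;=\; \frac{1}{|G|} \sum_{\rho,i,j} \rho(g^{-1})_{i,j}\,\bigl[\dim(\rho)\,(M^\rho)^T\bigr]_{i,j} \;=\; \frac{1}{|G|}\bigl(D^T \tilde M\bigr)_{g^{-1}},\]
where $\tilde M^\rho := \dim(\rho)\,(M^\rho)^T$.

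This yields a clean recipe. From the input $(M^\rho)_\rho$, first form the scaled transpose with entries $(\dim(\rho)/|G|)\,(M^\rho)^T_{i,j}$: transposition is just a relabeling of indices (free in the arithmetic circuit model), and scaling each of the $\sum_\rho \dim(\rho)^2 = |G|$ entries by a fixed constant (the $\dim(\rho)$ and $1/|G|$ factors merged) costs $|G|$ operations total. Second, apply $D^T$. Third, read off the output under the permutation $g \mapsto g^{-1}$, which is also free. The total cost is $|G|$ plus the cost of computing $D^T$. To obtain $D^T$ at no extra cost we invoke the transposition principle (Tellegen's theorem): any linear map $A : \C^n \to \C^p$ computable by a linear straight-line program using $L$ arithmetic operations admits a linear program for $A^T : \C^p \to \C^n$ using $L + n - p$ operations. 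Since $n = p = |G|$ for $D$, we get $D^T$ in exactly $m$ operations, and the grand total is $m + |G|$.

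The main conceptual step is recognizing that Fourier inversion, which superficially involves both $g \mapsto g^{-1}$ and an unrelated basis-change conjugation, collapses via trace-invariance to (up to transposition, permutation, and scaling) the transpose of the forward DFT. Beyond this, the argument is an application of a general-purpose tool, and the only bookkeeping subtlety is fusing the $\dim(\rho)$ and $1/|G|$ scalings into a single per-entry multiplication so that the overhead lands at exactly $|G|$ rather than $2|G|$ operations.
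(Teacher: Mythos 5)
The paper does not prove this statement; it cites Corollary 13.40 of B\"urgisser--Clausen--Shokrollahi, and your argument is precisely the one used there: express the inverse via the Fourier inversion formula (which, as you note, is basis-independent because the trace absorbs the conjugation), observe that it is a permuted, entry-scaled transpose of the forward DFT matrix, and invoke the transposition principle to get $D^T$ at the same linear-circuit cost as $D$; the $|G|$ extra operations come from the per-entry scalings. One small slip: for $A\colon\C^n\to\C^p$ the transposition principle gives $A^T$ in $L(A)+p-n$ operations, not $L(A)+n-p$ (test on the $1\times n$ all-ones row, whose transpose needs zero additions), but since $D$ is square this does not affect your bound.
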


\section{General strategy: reduction to subgroups} 

One way to organize the main algorithmic ideas in the quest for a fast
DFT for all finite groups is according to the subgroup
structure they exploit.  The algorithms themselves are recursive, with
the main content of the algorithm being the reduction to smaller
instances: DFTs over subgroups of the original group. When aiming for generalized DFTs for all finite
groups, such a reduction is paired with a group-theoretic structural
result, which guarantees the existence of certain subgroups that are used by the
reduction.

In the exposition below, it is helpful to assume that $\omega = 2$
and seek an ``exponent 1'' algorithm under this assumption (in general, the exponent achieved
will be a function of $\omega$, and in our main result this function
is $\omega/2$). By the term {\em overhead} we mean the
extra multiplicative factor in the operation count of the reduction,
beyond the nearly-linear operation count that would be necessary for an
exponent 1 algorithm.

\subsection{The single subgroup reduction}
\label{sec:single-subgroup}
The seminal Beth-Clausen algorithm
reduces computing a DFT over a group $G$ to computing several DFTs over a
subgroup $H$ of $G$. We call this the ``single subgroup
reduction''. Roughly speaking, the overhead in this reduction is
proportional to the
index of $H$ in $G$. The companion structural result is Lev's Theorem \cite{lev},
which shows that every finite group $G$ (except cyclic of prime order
which can be handled separately) has a
subgroup of order at least $\sqrt{G}$ (and this is tight, hence the
overhead is $\sqrt{|G|}$ in the worst case). As noted in the introduction, this
reduction together with Lev's Theorem implies exponent $3/2$ (assuming
$\omega =2$) for all
finite groups. 

Here is a
more detailed description, together with results we will need later. Let $H$ be a subgroup of $G$ and let $X$ be a set of distinct coset
representatives. We first compute several $H$-DFTs, one for each $x \in X$:
\[s_x = \sum_{h \in H} \alpha_{hx} \bigoplus_{\sigma \in \irr(H)}
\sigma(h)\]
and by using an {\em $H$-adapted} basis (Definition \ref{def:h-adapted}), we can lift
each $s_x$ to 
\[\overline{s_x} = \sum_{h \in H} \alpha_{hx} \bigoplus_{\rho \in \irr(G)}
\rho(h)\]
by just copying entries (which is free of cost in the arithmetic
model). Then to complete the DFT we need to compute
\[\sum_{x \in X} \overline{s_x} \bigoplus_{\rho \in \irr(G)} \rho(x).\]
Generically, this final computation requires an overhead proportional to $|X| =
[G:H]$, even when just considering the outermost summation. See Corollary 4 in \cite{HU-full} for the details to complete this sketch, yielding
the following:
\begin{theorem}[single subgroup reduction]
\label{thm:single-subgroup}
Let $G$ be a finite group and let $H$ be a subgroup. Then we can
compute a generalized DFT with respect to $G$ at a cost of $[G : H]$
many $H$-DFTs plus $O([G : H]|G|^{\omega/2 + \epsilon})$ operations,  for all $\epsilon > 0$.
\end{theorem}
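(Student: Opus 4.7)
The plan is to follow the outline sketched in the discussion preceding the statement. Let $X$ be a set of right coset representatives for $H$ in $G$, so every $g \in G$ admits a unique decomposition $g = hx$ with $h \in H$ and $x \in X$, and $\rho(g) = \rho(h)\rho(x)$. This lets me reorganize the target DFT as
\[
\sum_{g \in G}\alpha_g \bigoplus_{\rho \in \irr(G)} \rho(g) \;=\; \sum_{x \in X} \overline{s_x} \cdot \bigoplus_{\rho \in \irr(G)} \rho(x),
\]
where $\overline{s_x} = \sum_{h \in H}\alpha_{hx}\bigoplus_{\rho \in \irr(G)}\rho(h)$. The two tasks are then (i) producing the $\overline{s_x}$ and (ii) carrying out the outer multiplication and sum.

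For (i), I would first invoke the recursive subroutine: for each of the $[G:H]$ representatives $x$, compute the $H$-DFT $s_x = \sum_{h \in H}\alpha_{hx}\bigoplus_{\sigma \in \irr(H)}\sigma(h)$ applied to the subvector of $\alpha$ indexed by $Hx$. This accounts exactly for the ``$[G:H]$ many $H$-DFTs'' in the statement. Next, working in an $H$-adapted basis (Definition \ref{def:h-adapted}) for each $\rho \in \irr(G)$, the matrix $\rho(h)$ is block-diagonal with blocks of the form $\sigma(h)$ for $\sigma \in \irr(H)$, with multiplicities prescribed by the decomposition of $\mathrm{Res}^G_H(\rho)$. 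Consequently each $\overline{s_x}$ is assembled from $s_x$ by copying the $\sigma$-blocks into the appropriate diagonal positions across the $\rho$-blocks---free in the arithmetic model.

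For (ii), fix $x$ and compute $\overline{s_x}\cdot\bigoplus_\rho \rho(x)$ as $|\irr(G)|$ independent block products, one per $\rho$, each costing $O(\dim(\rho)^{\omega+\epsilon})$ operations via fast matrix multiplication. Summing over $\rho$ and applying Proposition \ref{prop:prelim-inequality} with $\alpha = \omega+\epsilon$ gives a per-$x$ cost of $O(|G|^{\omega/2+\epsilon})$, and the claimed $O([G:H]\cdot|G|^{\omega/2+\epsilon})$ bound follows by multiplying by $|X| = [G:H]$; the entrywise aggregation of the $[G:H]$ resulting block-diagonal matrices contributes $O([G:H]\cdot|G|)$, which is absorbed into the same bound.

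The main place to be careful is the ``free lifting'' of $s_x$ to $\overline{s_x}$: this rests on having fixed, as non-uniform preprocessing, both an $H$-adapted basis for each $\rho \in \irr(G)$ and the explicit decomposition of each $\mathrm{Res}^G_H(\rho)$ into irreps of $H$. Neither of these enters the online operation count, and $H$-adapted bases always exist, so once this bookkeeping is spelled out---which is where the cited Corollary 4 of \cite{HU-full} does the detailed work---the theorem follows.
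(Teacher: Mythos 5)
Your proposal follows the same decomposition the paper sketches (and attributes to Corollary 4 of \cite{HU-full}): coset decomposition $g = hx$, compute $[G:H]$ many $H$-DFTs $s_x$, lift to $\overline{s_x}$ for free via an $H$-adapted basis, then pay for the blockwise multiplications $\overline{s_x}\cdot\bigoplus_\rho\rho(x)$ and the outer sum, with Proposition~\ref{prop:prelim-inequality} giving the per-$x$ cost $O(|G|^{\omega/2+\epsilon})$. The accounting is correct and the argument is essentially the paper's own.
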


In the special case that $H$ is normal in $G$ and $G/H$ is cyclic of
prime order, the overhead of $[G:H]$ can be avoided, by using knowledge about
the way representations $\sigma \in \irr(H)$ extend to $\rho \in \irr(G)$. This
insight is the basis for the Beth-Clausen algorithm for solvable
groups. We need it here to handle the case of $G/H$ cyclic of prime
order, which is the single exceptional case not handled by our main
reduction. The following theorem can be inferred from the proof of Theorem 7.7 in
Clausen and Baum's monograph \cite{CB}:

\begin{theorem}[Clausen, Baum \cite{CB}]
\label{thm:prime-index}
Let $H$ be a normal subgroup of $G$ with prime index $p$. We can
compute a generalized DFT with respect to $G$ and an $H$-adapted basis, at a cost
of $p$ many $H$-DFTs plus 
\[O(p \log p)\cdot \sum_{\sigma \in \irr(H)} \dim(\sigma)^{\omega +
    \epsilon}\]
operations,  for all $\epsilon > 0$.
\end{theorem}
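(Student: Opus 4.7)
The plan is to implement the Beth--Clausen prime-index reduction. Choose $x \in G \setminus H$, so that $x^p \in H$ and $G = \bigsqcup_{i=0}^{p-1} Hx^i$. For each $i$, apply an $H$-DFT (in the chosen basis for $\irr(H)$) to the coset input $h \mapsto \alpha_{hx^i}$, producing matrices $M^\sigma_i = \sum_{h \in H} \alpha_{hx^i}\,\sigma(h)$ for every $\sigma \in \irr(H)$; this accounts for the $p$ many $H$-DFTs in the statement. It remains to assemble, for each $\rho \in \irr(G)$, the block
\[
N^\rho = \sum_{i=0}^{p-1} \Bigl(\sum_{h \in H} \alpha_{hx^i}\,\rho(h)\Bigr)\rho(x)^i.
\]
In an $H$-adapted basis for $\rho$ the inner sum is block-diagonal, with blocks drawn directly from the $M^\sigma_i$ (with multiplicities); hence the assembly decouples over $H$-orbits on $\irr(H)$.

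The crucial ingredient is Clifford theory at prime index, which splits these orbits into two types. If $\sigma \in \irr(H)$ has singleton orbit, $\sigma$ extends to some $\tilde\rho \in \irr(G)$ of dimension $\dim(\sigma)$, and the $p$ irreducible extensions are the twists $\chi_k \otimes \tilde\rho$ for $\chi_k$ ranging over characters of $G/H \iso \Z/p$. For such $\sigma$ I first compute $A_i = M^\sigma_i\,\tilde\rho(x)^i$ at cost $O(p\cdot\dim(\sigma)^{\omega+\epsilon})$ (reusing successive powers of $\tilde\rho(x)$), and then output $N^{\chi_k\otimes\tilde\rho} = \sum_{i} \chi_k(x)^i A_i$ for each $k$; this is a length-$p$ DFT applied entrywise to the $A_i$, and by Rader's algorithm for prime-length DFTs it costs $O(p\log p\cdot\dim(\sigma)^2)$. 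If instead $\sigma$ has orbit of size $p$, a single $\rho \in \irr(G)$ of dimension $p\dim(\sigma)$ sits above the orbit, and in the induced basis $\rho(x)^i$ is a block cyclic shift by $i$ positions with a wrap-around twist by $\sigma(x^p)$. Thus each product $(\sum_h \alpha_{hx^i}\rho(h))\rho(x)^i$ is obtained by permuting the $p$ blocks and at most $p$ multiplications of $\dim(\sigma)\times\dim(\sigma)$ matrices against $\sigma(x^p)$; summing over $i$ contributes $O(p^2\dim(\sigma)^{\omega+\epsilon})$, which is $O(p)$ times $\sum_{\sigma' \in \text{orbit}}\dim(\sigma')^{\omega+\epsilon}$. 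Aggregating both cases over all $\sigma \in \irr(H)$ yields the stated $O(p\log p)\sum_\sigma \dim(\sigma)^{\omega+\epsilon}$ bound.

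The main obstacle is not the operation count but the basis set-up: one must construct $\tilde\rho$ in the unramified case and the induced basis in the ramified case so that (i) the restriction to $H$ is block-diagonal in the prescribed block-bases (delivering the $H$-adapted property advertised in the theorem), and (ii) $\rho(x)$ has the promised character-twist or cyclic-shift form so that its powers act as claimed above. The existence and canonical form of these bases is exactly what the proof of Theorem 7.7 in \cite{CB} establishes; granted this set-up, the combination steps described above complete the proof.
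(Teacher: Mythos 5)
Your proof is correct, and it reconstructs essentially the intended argument: the paper gives no proof of its own but defers to Theorem 7.7 of Clausen--Baum, and the Clifford-theoretic prime-index reduction you describe (splitting the $H$-orbits on $\irr(H)$ into singletons, handled by a character twist of an extension $\tilde\rho$ plus an entrywise length-$p$ FFT, and full $p$-orbits, handled by the block-cyclic-shift structure of the induced representation) is that argument. Your operation counts check out in both cases: the singleton orbit $\sigma$ costs $O(p\log p\cdot\dim(\sigma)^{\omega+\epsilon})$ against a budget contribution of $O(p\log p)\dim(\sigma)^{\omega+\epsilon}$; and the full orbit of $p$ irreps of common dimension $d$ costs at most $\binom{p}{2}$ matrix products with $\sigma(x^p)$, i.e.\ $O(p^2 d^{\omega+\epsilon})$, against a budget contribution of $O(p\log p)\cdot p\,d^{\omega+\epsilon}$. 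The only cosmetic point worth noting is that in the ramified case no genuine summation over $i$ occurs: in the induced basis each block position of $N^\rho$ receives a contribution from exactly one $i$, so the assembly is a reindexing plus the $\binom{p}{2}$ wrap-around multiplications you count -- but this only makes your bound looser, not wrong. The final paragraph's appeal to \cite{CB} for the existence of the adapted bases is fine, since those bases are the standard $H$-adapted ones supplied by Clifford theory, the theorem promises the DFT only for \emph{some} $H$-adapted basis (the basis change to an arbitrary basis is supplied separately by Corollary~\ref{cor:prime-index}), and these choices are preprocessing in the non-uniform arithmetic model.
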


For our purposes the following slightly coarser bound suffices, which
accomodates an arbitary basis change (hence obviating the need for an
$H$-adapted basis):
\begin{corollary}
\label{cor:prime-index}
Let $H$ be a normal subgroup of $G$ with prime index $p$. Generalized
DFTs with respect to $G$ can be computed at a cost
of $p$ many $H$-DFTs plus $O(|G|^{\omega/2 + \epsilon})$
operations, for all $\epsilon > 0$. 
\end{corollary}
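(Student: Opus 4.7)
The plan is to invoke Theorem \ref{thm:prime-index} to compute the DFT in an $H$-adapted basis, and then apply a change of basis on each block to convert to the arbitrary desired basis. The point is that the change-of-basis matrices are fixed (non-uniform) and the total cost of conjugating all the output blocks will already be absorbed into the $O(|G|^{\omega/2+\epsilon})$ term.

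First, I would apply Theorem \ref{thm:prime-index}, which produces the generalized DFT in an $H$-adapted basis using $p$ many $H$-DFTs plus
\[O(p \log p)\cdot \sum_{\sigma \in \irr(H)} \dim(\sigma)^{\omega+\epsilon}\]
operations. By Proposition \ref{prop:prelim-inequality} applied with $\alpha = \omega + \epsilon$, the inner sum is at most $|H|^{\omega/2 + \epsilon/2}$. Since $|G| = p|H|$ and $\omega \ge 2$, the factor $p \log p$ is easily absorbed at the cost of slightly increasing $\epsilon$, so this term is $O(|G|^{\omega/2+\epsilon})$.

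Next, I would convert the output from the $H$-adapted basis to the arbitrary target basis. For each $\rho \in \irr(G)$ there is a fixed invertible $\dim(\rho) \times \dim(\rho)$ change-of-basis matrix $T_\rho$; the block of the output corresponding to $\rho$ must be replaced by $T_\rho^{-1} M^\rho T_\rho$. Each such conjugation is two matrix multiplications of $\dim(\rho) \times \dim(\rho)$ matrices and thus costs $O(\dim(\rho)^{\omega+\epsilon})$ operations. Summing over all $\rho \in \irr(G)$ and again invoking Proposition \ref{prop:prelim-inequality} with $\alpha = \omega + \epsilon$ bounds the total conversion cost by $O(|G|^{\omega/2 + \epsilon})$.

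Combining the two stages gives a total of $p$ many $H$-DFTs plus $O(|G|^{\omega/2+\epsilon})$ operations, as claimed. There is no real obstacle here; the only thing to be careful about is checking that the $p \log p$ factor multiplying the sum from Theorem \ref{thm:prime-index} does not spoil the bound, which is handled by Proposition \ref{prop:prelim-inequality} together with the relation $|G| = p|H|$ and a harmless inflation of $\epsilon$.
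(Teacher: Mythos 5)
Your proposal is correct and follows essentially the same route as the paper: apply Theorem \ref{thm:prime-index}, bound the $\sum_\sigma \dim(\sigma)^{\omega+\epsilon}$ term via Proposition \ref{prop:prelim-inequality}, absorb the $p\log p$ factor using $|G|=p|H|$ and $\omega\ge 2$, and then account separately for the change to an arbitrary basis, again bounded by Proposition \ref{prop:prelim-inequality}. The only (minor) difference is that you spell out the per-block conjugation explicitly; the paper states the basis-change cost directly.
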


\begin{proof}
Applying Proposition \ref{prop:prelim-inequality} to Theorem
\ref{thm:prime-index} with $\alpha = \omega + \epsilon$ yields an
operation count of $O(p\log p)|H|^{\omega + \epsilon/2}$, which is at
most $O(|G|^{\omega/2+\epsilon})$. Performing
an arbitary basis change costs 
\[\sum_{\rho \in \irr(G)} O(\dim(\rho)^{\omega +
    \epsilon})\]
operations which is again at most $O(|G|^{\omega/2+\epsilon})$ by
Proposition \ref{prop:prelim-inequality}. 
\end{proof}

\subsection{The double subgroup reduction}
\label{sec:double-subgroup}
Recently, Hsu and Umans proposed a ``double subgroup
reduction'' \cite{HU18} which reduces computing a DFT over a group $G$
to computing
several DFTs over two subgroups, $H$ and $K$. This reduction is
especially effective for linear groups (see \cite{HU18}). Roughly speaking,
the overhead in this reduction is proportional to $|G|/|HK|$ and $|H
\cap K|$. The companion structural result shows that every finite
group $G$ (except $p$-groups which can be handled separately) has two
proper subgroups $H$ and $K$ for which $|G|/|HK|$ is negligible. However, $|H
\cap K|$ might still be large, which is the one thing standing in the
way of deriving an ``exponent $\omega/2$'' algorithm from this reduction.

To illustrate the bottleneck in this reduction, we describe it in more
detail. Let $H, K$ be subgroups of $G$ and assume $|G|/|HK|$ is negligible.
We first compute an intermediate representation \[\sum_{g = hk \in HK}\alpha_{g} \bigoplus_{\substack{\sigma \in
  \irr(H)\\ \tau \in \irr(K)}} \sigma(h) \otimes \tau(k)\] in two
steps (and then lift it to a $G$-DFT). The first of the two steps is
to 
compute at most $[G:H]$ many $H$-DFTs, yielding, for each $k \in K' \subseteq
K$ (where $K'$ is a set of distinct coset representatives of $H$ in $G$):
\[s_k = \sum_{h \in H} \alpha_{hk} \bigoplus_{\sigma \in \irr(H)}
\sigma(h).\]
The second step is as
follows: for each {\em entry} of the block-diagonal matrix $s_k$, we use this entry (as $k$
varies) as the data for a $K$-DFT. There are $\sum_{\sigma \in
  \irr(H)} \dim(\sigma)^2 = |H|$ such entries in
general. Thus the second step entails $|H|$ many $K$-DFTs, and this
represents the key bottleneck. Note that when $|G|/|HK|$ is negligible, $|H||K|$ is approximately
$|G||H\cap K|$, and this explains the overhead of roughly $|H
\cap K|$ which prevents obtaining an ``exponent $\omega/2$'' algorithm from this
reduction. For completeness we record the main theorem of \cite{HU-full}
here:
\begin{theorem}[Theorem 12 in \cite{HU-full}]
Let $G$ be a finite group and let $H,K$ be subgroups. Then we can compute generalized DFTs with respect to $G$ at the cost of $|H|$ many
$K$-DFTS, $|K|$ many $H$-DFTs, plus \[O(|G|^{\omega/2 + \epsilon}
  +(|H||K|)^{\omega/2 + \epsilon} )\]
operations, all repeated $O(\frac{|G|\log |G|}{|HK|})$ times, for all $\epsilon > 0$. 
\end{theorem}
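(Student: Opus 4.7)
The plan is to follow the sketch given immediately before the theorem. First, apply Theorem \ref{thm:trick} with $S = HK$: this introduces the multiplicative factor $O(|G|\log|G|/|HK|)$ and reduces the problem to computing a $G$-DFT of an input $\alpha$ supported on $HK$. It thus suffices to show that, for $\alpha$ supported on $HK$, the $G$-DFT can be computed using $|K|$ many $H$-DFTs, $|H|$ many $K$-DFTs, and $O(|G|^{\omega/2+\epsilon}+(|H||K|)^{\omega/2+\epsilon})$ additional operations.

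Next, I would build an intermediate tensor-product representation indexed by pairs $(\sigma,\tau)\in\irr(H)\times\irr(K)$ in two stages. Fix a set $K'\subseteq K$ of coset representatives of $H\cap K$ in $K$, so every $g\in HK$ factors uniquely as $g=hk$ with $h\in H$ and $k\in K'$, and $|K'|=|K|/|H\cap K|\le|K|$. Stage~1 computes, for each $k\in K'$, the $H$-DFT
\[ s_k \;=\; \sum_{h\in H} \alpha_{hk}\bigoplus_{\sigma\in\irr(H)} \sigma(h), \]
accounting for at most $|K|$ many $H$-DFTs. Stage~2 views $s_k$ as a block-diagonal matrix whose $\sum_\sigma \dim(\sigma)^2 = |H|$ scalar entries depend on $k$; performing a $K$-DFT on each such entry (after extending by zero off $K'$) and reassembling via Proposition \ref{prop:mat-mult} yields, for every $(\sigma,\tau)\in\irr(H)\times\irr(K)$,
\[ T_{\sigma,\tau} \;=\; \sum_{g=hk\in HK}\alpha_g\,\sigma(h)\otimes\tau(k), \]
at the cost of $|H|$ many $K$-DFTs.

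Finally, I would lift $\{T_{\sigma,\tau}\}$ to the target $G$-DFT blocks $D_\rho := \sum_g\alpha_g\rho(g)=\sum_g\alpha_g\rho(h)\rho(k)$. The key identity, a specialization of Proposition \ref{prop:mat-mult} with $B=I_{\dim(\rho)}$, is $\mbox{vec}(\rho(h)\rho(k))=(\rho(h)\otimes\rho(k)^T)\,\mbox{vec}(I_{\dim(\rho)})$, so $\mbox{vec}(D_\rho)=M_\rho\,\mbox{vec}(I_{\dim(\rho)})$ with $M_\rho := \sum_g \alpha_g\,\rho(h)\otimes\rho(k)^T$. Since $\mbox{Res}^G_H(\rho)$ and $\mbox{Res}^G_K(\rho)$ decompose into copies of various $\sigma$'s and $\tau$'s via \emph{fixed} basis changes hard-wired into the non-uniform circuit, $M_\rho$ is a fixed linear image of the intermediate data $\{T_{\sigma,\tau}\}$. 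Once $M_\rho$ is formed, extracting $D_\rho$ is a single $\dim(\rho)\times\dim(\rho)\times\dim(\rho)$ matrix multiplication costing $O(\dim(\rho)^{\omega+\epsilon})$, which by Proposition \ref{prop:prelim-inequality} sums over $\rho$ to the $O(|G|^{\omega/2+\epsilon})$ term. The fixed basis-change step, viewed as block matrix multiplications on intermediate data of total dimension $|H||K|=(\sum_\sigma\dim(\sigma)^2)(\sum_\tau\dim(\tau)^2)$, contributes the $O((|H||K|)^{\omega/2+\epsilon})$ term.

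The main obstacle is this final lifting step: the fixed linear map $\{T_{\sigma,\tau}\}\mapsto\{M_\rho\}$ must be implemented via fast matrix multiplication on blocks of compatible sizes rather than as a naive dense multiplication, and the block dimensions have to be tracked carefully in terms of $\dim(\rho),\dim(\sigma),\dim(\tau)$ so that the summed cost collapses via Proposition \ref{prop:prelim-inequality} to the claimed $O(|G|^{\omega/2+\epsilon}+(|H||K|)^{\omega/2+\epsilon})$, rather than blowing up to something like $\sum_\rho \dim(\rho)^{2\omega}$.
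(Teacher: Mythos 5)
This theorem is quoted from \cite{HU-full} (where it is Theorem~12); the present paper records it ``for completeness'' and supplies only the informal two-stage sketch that you followed --- there is no proof in this paper to compare against. Your reconstruction of the first two stages matches that sketch exactly and is correct: apply Theorem~\ref{thm:trick} with $S = HK$, compute $s_k$ for $k$ ranging over coset representatives of $H\cap K$ in $K$ via at most $|K|$ $H$-DFTs, then perform one $K$-DFT per scalar entry, i.e.\ $\sum_\sigma\dim(\sigma)^2=|H|$ many $K$-DFTs, to obtain the $T_{\sigma,\tau}$.

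The lifting step has the genuine gap you yourself flag at the end. Forming $M_\rho=\sum_g\alpha_g\,\rho(h)\otimes\rho(k)^T$ explicitly is the wrong move: $M_\rho$ is a dense $\dim(\rho)^2\times\dim(\rho)^2$ matrix, and producing it as the conjugate of $\bigoplus_{\sigma,\tau}T_{\sigma,\tau}$ by the fixed basis-change matrix $S_\rho\otimes T_\rho^{-T}$ costs $\Theta(\dim(\rho)^{2\omega})$ per irrep, which sums to roughly $|G|^\omega$ and blows past the claimed bound --- exactly the $\sum_\rho\dim(\rho)^{2\omega}$ failure mode you mention. The correct route (this is what \cite{HU-full} does, and what Lemma~\ref{lem:lift} does in the analogous triple-subgroup setting of this paper) never forms $M_\rho$: one writes
\[\mbox{vec}(D_\rho)=\bigl(S_\rho^{-1}\otimes T_\rho^{T}\bigr)\Bigl(\bigoplus_{\sigma,\tau}T_{\sigma,\tau}\Bigr)\,\mbox{vec}(Q_\rho),\qquad Q_\rho:=S_\rho T_\rho^{-1},\]
and evaluates right-to-left. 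The vectors $\mbox{vec}(Q_\rho)$ are constants of the circuit; the middle block-diagonal matrix has total block area $\sum_{\sigma,\tau}(\dim\sigma\dim\tau)^2=|H||K|$ and is multiplied against the $\mbox{vec}(Q_\rho)$ \emph{stacked over all $\rho$}, a batching that is essential. The output volume is $\sum_\rho\dim(\rho)^2=|G|$, and the blockwise rectangular matrix-multiplication estimate (the same one used in the proof of Lemma~\ref{lem:lift}) then gives $O\bigl((|H||K|)^{(\omega-2+\epsilon)/2}\cdot(|H||K|+|G|)\bigr)=O(|G|^{\omega/2+\epsilon}+(|H||K|)^{\omega/2+\epsilon})$, accounting for both terms in the statement. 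Finally the outer factor $S_\rho^{-1}\otimes T_\rho^{T}$ is applied to a single $\dim(\rho)^2$-vector, which by Proposition~\ref{prop:mat-mult} is two $\dim(\rho)\times\dim(\rho)$ products and contributes another $O(|G|^{\omega/2+\epsilon})$ by Proposition~\ref{prop:prelim-inequality}.
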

Our main innovation, described in the next section, is a
way to overcome the bottleneck: when $H \cap K = N$ is a normal
subgroup of $G$, we are able to rewrite each $s_k$ as a sum of $|N|$ matrices
with special structure: effectively, there are only
$|H/N|$ many non-zero ``entries'' for which we need to compute a
$K$-DFT, and as we will show, this exactly removes the overhead factor.

\subsection{The triple subgroup reduction}

In this section we give our main new result.  We devise a ``triple subgroup reduction'' which reduces
computing a DFT over $G$ to computing several DFTs over two subgroups,
$H$ and $K$, {\em and} several inverse DFTs over the intersection $N =
H \cap K$, when $N$ is normal in $G$. Roughly speaking, the overhead is
proportional to $|G|/|HK|$.
The companion structural result (Theorem \ref{thm:triple-subgroup-structure}) shows that for every
finite group $G$, if $N$ is a
maximal normal subgroup in $G$ then (except for the case of $|G/N|$
cyclic of prime order, which can be handled separately) there exist
two proper subgroups $H$ and
$K$ with $H \cap K = N$, such that $|G|/|HK|$ is negligible. This
is the key to the claimed exponent $\omega/2$ algorithm.

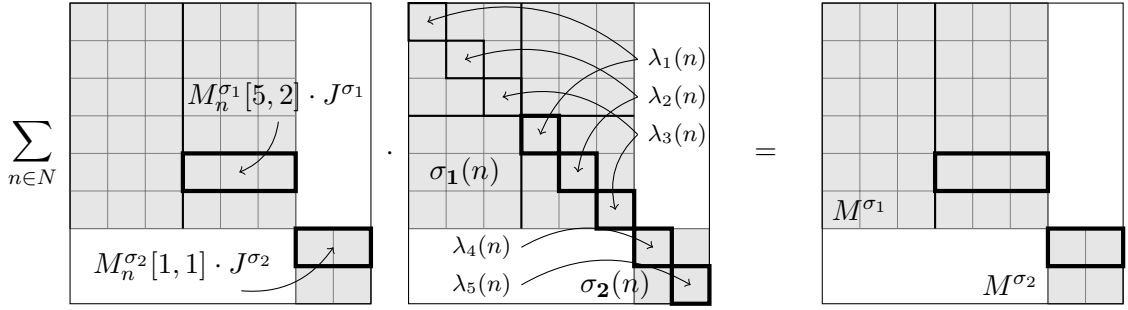
\begin{figure}[t]
\begin{center}
\begin{tikzpicture}[scale = 0.5]
\node at (0,5) {$\displaystyle \sum_{n\in N}$};

\draw [fill=black, fill opacity = 0.1] (1,3) rectangle (7,9);
\draw[help lines] (1,3) grid (7,9);
\draw [fill=black, fill opacity = 0.1] (7,1) rectangle (9,3);
\draw[help lines] (7,1) grid (9,3);
\draw (1,1) rectangle (9,9);
\draw [ultra thick] (4,4) node (b) {} rectangle (7,5);
\coordinate (b1) at (5.5, 4.5);
\node at (6.5, 6.5) (a1) {$M_n^{\sigma_1}[5,2]\cdot J^{\sigma_1}$};
\draw [->] (a1) to[bend left] (b1);
\draw [ultra thick] (7,2) node (d) {} rectangle (9,3);
\coordinate (b2) at (8, 2.5);
\node at (4,2) (a2) {$M_n^{\sigma_2}[1,1]\cdot J^{\sigma_2}$};
\draw [->] (a2) to[bend right] (b2);
\draw [thick] (4,9) -- (4,3);

\node at (9.5,5) {$\cdot$};

\draw [fill=black, fill opacity = 0.1] (10,3) rectangle (16,9);
\draw[help lines] (10,3) grid (16,9);
\draw [fill=black, fill opacity = 0.1] (16,1) rectangle (18,3);
\draw[help lines] (16,1) grid (18,3);
\draw (10,1) rectangle (18,9);
\draw [thick] (13,9) -- (13,3);
\draw [thick] (10,6) -- (16,6);
\node at (11.5, 4.5) {$\mathbf{\sigma_1}(n)$};
\node at (15.5, 1.5) {$\mathbf{\sigma_2}(n)$};

\draw [ thick]  (10,9) rectangle (11,8);
\coordinate (r1) at (10.5, 8.5);
\node at (17.15,7.5) (l1) {\footnotesize $\lambda_1(n)$};
\draw [<-] (r1) to[bend left] (l1.west); 
\draw [ thick]  (11,8)  rectangle (12,7);
\coordinate (r2) at (11.5, 7.5);
\node at (17.15, 6.5) (l2){\footnotesize $\lambda_2(n)$};
\draw [<-] (r2) to[bend left] (l2.west); 
\draw [ thick]  (12,7)  rectangle (13,6);
\coordinate (r3) at (12.5, 6.5);
\node at (17.15, 5.5) (l3){\footnotesize $\lambda_3(n)$};
\draw [<-] (r3) to[bend left] (l3.west); 
\draw [ultra thick]  (13,6)  rectangle (14,5);
\coordinate (r4) at (13.5, 5.5);
\draw [<-] (r4) to[bend left] (l1.west); 
\draw [ultra thick]  (14,5)  rectangle (15,4);
\coordinate (r5) at (14.5, 4.5);
\draw [<-] (r5) to[bend left] (l2.west); 
\draw [ultra thick]  (15,4) rectangle (16,3);
\coordinate (r6) at (15.5, 3.5);
\draw [<-] (r6) to[bend left] (l3.west); 

\draw [ultra thick]  (16,3) rectangle (17,2);
\coordinate (r7) at (16.5, 2.5);
\node at (13, 2.5) [anchor=east] (l7) {\footnotesize $\lambda_4(n)$};
\draw [<-] (r7) to[bend right] (l7.east); 
\draw [ultra thick]  (17,2) rectangle (18,1);
\coordinate (r8) at (17.5, 1.5);
\node at (13, 1.5) [anchor=east] (l8) {\footnotesize $\lambda_5(n)$};
\draw [<-] (r8) to[bend right] (l8.east); 

\node at (19.5,5) {$=$};

\draw [fill=black, fill opacity = 0.1] (21,3) rectangle (27,9);
\draw[help lines] (21,3) grid (27,9);
\draw [fill=black, fill opacity = 0.1] (27,1) rectangle (29,3);
\draw[help lines] (27,1) grid (29,3);
\draw (21,1) rectangle (29,9);
\draw [thick] (24,3) -- (24,9);
\draw [ultra thick] (24,4) rectangle (27,5);
\draw [ultra thick] (27,3) rectangle (29,2);
\node at (22, 3.5) {${M^{\sigma_1}}$};
\node at (26, 1.5) {${M^{\sigma_2}}$};

\end{tikzpicture}
\end{center}
\caption{Illustration of the 
  proof of Theorem \ref{thm:sparse}. In this example $\irr(H) =
  \{\sigma_1, \sigma_2\}$, $\irr(N) = \{\lambda_1,
  \lambda_2,\lambda_3, \lambda_4, \lambda_5\}$; the orbits are ${\cal
    O}_1 = \{\lambda_1, \lambda_2, \lambda_3\}$ and ${\cal O}_2 =
  \{\lambda_4, \lambda_5\}$; $S_1 = \{\sigma_1\}$ and $S_2 =
  \{\sigma_2\}$; and the multiplicities are $e_{\sigma_1} = 2$ and
  $e_{\sigma_2} = 1$. In the figure, we highlight the parts of the
  matrices that give rise to the system of equations solved with a
  single inverse $N$-DFT, corresponding to the value $a = f_1(\sigma_1,
  5,2) = f_2(\sigma_2,1,1)$. This inverse $N$-DFT with
  the highlighted blocks of $M^{\sigma_1}$ and $M^{\sigma_2}$ as input data
  yields the scalars $M_n^{\sigma_1}[5,2] = M_n^{\sigma_2}[1,1]$ that
  satisfy the simultaneous equations.}
\label{fig:inverse-DFT}
\end{figure}

Let $H$ be a group with normal subgroup $N$. The main technical theorem shows how to rewrite the output of an $H$-DFT as the
sum of $|N|$ matrices each of which only has ``$|H/N|$ degrees of
freedom''. In the following theorem we adopt the notation introduced
in Section \ref{sec:clifford}.
\begin{theorem}
\label{thm:sparse}
Let $H$ be a group and $N$ a normal subgroup. 
For every 
\[M = \bigoplus_{\sigma \in \irr(H)} M^\sigma \in \bigoplus_{\sigma \in
    \irr(H)} \C^{\dim(\sigma) \times \dim(\sigma)},\]
the following holds with respect to an $N$-adapted basis: there exist
matrices $M^\sigma_n \in \C^{\dim(\sigma)/d_\sigma \times e_\sigma}$ for which 
\[\sum_{n \in N} (M^{\sigma}_n \otimes J^\sigma) \cdot \sigma(n) = M^\sigma,\]
where $J^{\sigma}$ is the $d_\sigma \times \dim(\sigma)/e_\sigma $
matrix $(I_{d_\sigma} | I_{d_\sigma} | \cdots | I_{d_\sigma})$. Moreover, given injective functions $f_\ell$ from $\{(\sigma, i,
  j): \sigma \in S_\ell,  i \in
  [\dim(\sigma)/d_\sigma], j \in [e_\sigma]\}$ to $[r]$, the $M_n^\sigma$ can be taken
  to satisfy 
\[f_\ell(\sigma, i, j) = f_{\ell'}(\sigma', i', j') \;\;\;\;
  \Rightarrow \;\;\;\; 
  \forall n \; M_n^\sigma[i,j] = M_n^{\sigma'}[i',j'],\]
and these matrices $M^\sigma_n$ can be obtained from $M$ by computing $r$ inverse $N$-DFTs.
\end{theorem}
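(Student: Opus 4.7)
The strategy is to decode the defining equation $\sum_n(M^\sigma_n\otimes J^\sigma)\sigma(n)=M^\sigma$ block-by-block in the $N$-adapted basis, recognize each family of scalar equations as an inverse $N$-DFT, and then use the $f_\ell$'s to merge equations coming from different orbits into shared inverse DFTs. Fix $\sigma\in\irr(H)$ with $i_\sigma=\ell$ and order the basis so that $\sigma(n)$ is block-diagonal with $e_\sigma|\mathcal{O}_\ell|$ blocks, the $(j,k)$-th one (in lex order on $[e_\sigma]\times[|\mathcal{O}_\ell|]$) equalling $\lambda_k(n)$; this is just the Clifford decomposition $\sigma|_N\cong e_\sigma\bigoplus_{\lambda\in\mathcal{O}_\ell}\lambda$ from Section~\ref{sec:clifford}. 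The technical heart of the argument is the following Kronecker identity: writing $c_j^{(n)}$ for the $j$-th column of $M^\sigma_n$, the $(j,k)$-th column-block (of width $d_\sigma$) of $M^\sigma_n\otimes J^\sigma$ equals $c_j^{(n)}\otimes I_{d_\sigma}$, \emph{independent of $k$}. That $k$-independence is precisely what encodes the ``$|H/N|$ degrees of freedom'' in $M^\sigma$ alluded to in the paragraph preceding the theorem.

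Once this identity is in hand, right-multiplication by $\sigma(n)$ converts the trailing identity into $\lambda_k(n)$, so the $(j,k)$-th column-block of $(M^\sigma_n\otimes J^\sigma)\sigma(n)$ is $c_j^{(n)}\otimes\lambda_k(n)$, whose $i$-th row-block of size $d_\sigma\times d_\sigma$ is $M^\sigma_n[i,j]\,\lambda_k(n)$. Summing over $n$ and equating to the corresponding sub-block $B^\sigma_{i,j,k}$ of $M^\sigma$ reduces the defining equation to the scalar linear system
\[
\sum_{n\in N}M^\sigma_n[i,j]\,\lambda_k(n)\;=\;B^\sigma_{i,j,k}\qquad\text{for each }\sigma,\;i\in[\dim(\sigma)/d_\sigma],\;j\in[e_\sigma],\;\lambda_k\in\mathcal{O}_{i_\sigma}.
\]
For a fixed tuple $(\sigma,i,j)$ this is precisely an inverse $N$-DFT: feed $B^\sigma_{i,j,k}$ into the slot at $\lambda_k\in\mathcal{O}_{i_\sigma}$ and $0$ into every other slot $\lambda\in\irr(N)$; the inverse $N$-DFT then returns the scalar sequence $(M^\sigma_n[i,j])_{n\in N}$, which I declare to be the entries of the desired matrices.

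To collapse to $r$ total inverse DFTs I pool tuples via the $f_\ell$'s: for each $a\in[r]$, let $T_a=\{(\sigma,i,j):f_{i_\sigma}(\sigma,i,j)=a\}$. Because each $f_\ell$ is injective, $T_a$ contains at most one tuple from each orbit, so the $\lambda$-slots $\mathcal{O}_{i_\sigma}$ demanded by its constraints are pairwise disjoint. Hence a single inverse $N$-DFT---whose input at $\lambda_k\in\mathcal{O}_\ell$ is the $B^\sigma_{i,j,k}$ prescribed by the unique tuple $(\sigma,i,j)\in T_a\cap S_\ell$ (or $0$ if no such tuple exists)---simultaneously satisfies every constraint in $T_a$. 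Assigning the common output scalar to $M^\sigma_n[i,j]$ for every $(\sigma,i,j)\in T_a$ automatically enforces the required cross-orbit equalities $M^\sigma_n[i,j]=M^{\sigma'}_n[i',j']$, and uses at most $r$ inverse $N$-DFTs in all. The only genuinely non-routine step is the Kronecker identity in the first paragraph; once it is verified by a direct computation with the chosen column ordering, everything else is formal slot-partitioning.
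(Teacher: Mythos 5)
Your proof is correct and follows essentially the same route as the paper's: both exploit the $N$-adapted form $\sigma(n)=I_{e_\sigma}\otimes\bigoplus_{\lambda\in\mathcal{O}_\ell}\lambda(n)$ to reduce the defining equation to the scalar system $\sum_{n\in N}M^\sigma_n[i,j]\lambda_k(n)=B^\sigma_{i,j,k}$, recognize each tuple $(\sigma,i,j)$ as one underdetermined inverse $N$-DFT (zero-padded outside $\mathcal{O}_{i_\sigma}$), and then invoke injectivity of the $f_\ell$ plus the orbit partition of $\irr(N)$ to pack the constraints indexed by a common label $a\in[r]$ into a single inverse $N$-DFT with pairwise disjoint $\lambda$-slots. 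Your column-by-column Kronecker computation is just an unpacking of the paper's one-line mixed-product step $(M^\sigma_n\otimes J^\sigma)\sigma(n)=M^\sigma_n\otimes(\lambda_1(n)\,|\,\cdots\,|\,\lambda_{|\mathcal{O}_\ell|}(n))$.
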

One should think of the functions $f_\ell$ as labeling the entries of
the $M_n^\sigma$ matrices for the $\sigma$ in a given $S_\ell$. This labeling is
then used to ensure that entries of $M_n^\sigma$ with $\sigma \in
S_\ell$ and the entries of $M_n^{\sigma'}$ with $\sigma' \in
S_{\ell'}$ are equal, if they have the same labels. In Section \ref{sec:labeling} we
will show how to choose this labeling so that the final ``lifting''
step of our algorithm can be efficiently computed. For now, we note
that Proposition \ref{prop:equality} implies that there {\em exist}
labellings $f_\ell$ with $r = |H/N|$, and indeed our actual choice of
$f_\ell$ in Section \ref{sec:labeling} will have $r = O(|H/N|\log|H/N|)$, which is
not much larger.   
\begin{proof}
Fix $\sigma \in \irr(H)$, and recall that there is a unique $S_\ell$
containing $\sigma$. Since we are using an $N$-adapted basis, $\sigma(n)$ has the form 
\[I_{e_\sigma} \otimes \bigoplus_{\lambda \in {\cal O}_\ell } \lambda(n),\]
and thus
\begin{equation}
\label{eq:structured}
\sum_{n \in N} (M^{\sigma}_n \otimes J^{\sigma}) \cdot \sigma(n) =
  \sum_{n \in N} M^{\sigma}_n \otimes (\lambda_1(n) | \lambda_2(n) |
  \cdots |\lambda_{|{\cal O}_\ell|}(n))
\end{equation}
where $\lambda_1, \ldots, \lambda_{|{\cal O}_\ell|}$ is an enumeration of
${\cal O}_\ell$.  Since these are pairwise inequivalent irreps, the span of 
\[\{ (\lambda_1(n) | \lambda_2(n) |
  \cdots |\lambda_{|{\cal O}_\ell|}(n)) : n \in N\}\]
is the full matrix algebra $\C^{d_\sigma \times \dim(\sigma)/e_\sigma}$. Hence we can choose the $M^{\sigma}_n$ so that
expression (\ref{eq:structured}) equals an arbitrary $M^{\sigma} \in
\C^{\dim(\sigma) \times \dim(\sigma)}$. 

In particular, for each $\sigma$, the $(i,j)$
entries of the $M^{\sigma}_n$ should satisfy
\begin{equation}
\label{eq:inverse-N-DFT}
\sum_{n \in N} M^{\sigma}_n[i,j] \left ( \begin{array}{l} \lambda_1(n) \\ \lambda_2(n) \\
  \vdots \\ \lambda_{|{\cal O}_\ell|}(n)\end{array}\right ) = 
\left (\begin{array}{l}M^\sigma[i, j\cdot |{\cal O}_\ell|] \\
  M^{\sigma}[i, j\cdot |{\cal O}_\ell| + 1]\\ \vdots \\
         M^{\sigma}[i, j\cdot |{\cal O}_\ell|+|{\cal O}_\ell|-1]\end{array}\right )
\end{equation}
where $M^\sigma$ on the right-hand-side is blocked into $d_\sigma
\times d_\sigma$ submatrices and indexed accordingly. Thus the values of a given entry of $M^{\sigma}_n$ as $n$ ranges
  over $N$, can be found in an inverse $N$-DFT with the appropriate 
  blocks of $M^{\sigma}$ as input data. 

Observe however that in general, ${\cal O}_\ell$ is a {\em proper}
subset of $\irr(H)$, and hence the aforementioned inverse $N$-DFT is
underdetermined; for example Equation (\ref{eq:inverse-N-DFT}) remains
satisfied if we require $\sum_{n \in N}M_n^\sigma[i,j] \lambda(n) = 0$
for all $\lambda \in \irr(H) \setminus {\cal O}_\ell$. 

Indeed, we can
{\em simultaneously} solve Equation (\ref{eq:inverse-N-DFT}) with
respect to several $\sigma \in \irr(H)$ via a single inverse
$N$-DFT, provided the associated orbits ${\cal O}_{i_\sigma}$ are
different. To prove the ``moreover'' part of the theorem statement,
then, we set up the following system of equations, for a given $a \in
[r]$: for each $\ell$ for which $f_\ell(\sigma, i,j) = a$ we
{\em simultaneously} require that Equation (\ref{eq:inverse-N-DFT}) holds with
respect to $\sigma, i, j$ (and note these are determined by $a$ since
$f_\ell$ is injective). Since the $S_\ell$ partition $\irr(H)$, 
selecting at most one $\sigma$ from each $S_\ell$ results in a system
that mentions each $\lambda \in \irr(N)$ at most once. Hence a single inverse $N$-DFT solves this system of
equations. See Figure \ref{fig:inverse-DFT}. We do this once for each $a \in [r]$, to produce the
matrices $M^\sigma_n$ from the original $M$, using $r$ inverse $N$-DFTs.
\end{proof}

\subsubsection{Choosing the labelings $f_\ell$}
\label{sec:labeling}

To make use of Theorem \ref{thm:sparse}, we need to define injective functions $f_\ell$ from \[\{(\sigma, i,
  j): \sigma \in S_\ell, i \in
  [\dim(\sigma)/d_\sigma], j \in [e_\sigma]\}\] to $[r]$. We identify the domain of
  $f_\ell$ with the entries of a block-diagonal matrix, with
  rectangular blocks of size $\dim(\sigma)/d_\sigma \times e_\sigma$,
  as $\sigma$ ranges over $S_\ell$. Recall that by Proposition
  \ref{prop:equality}, the total number of entries in these blocks is
  $|H/N|$.

\begin{figure}[t]
\begin{center}
\begin{tikzpicture}[scale = 0.5]

\begin{scope}[shift={(2,12)}]
\draw (1,1) rectangle (6,12);
\draw [fill=black, fill opacity = 0.1] (1,12) rectangle (2,8);
\draw[help lines] (1,12) grid (2,8);
\draw [fill=black, fill opacity = 0.1] (2,8) rectangle (3,5);
\draw[help lines] (2,8) grid (3,5);
\draw [fill=black, fill opacity = 0.1] (3,5) rectangle (4,3);
\draw[help lines] (3,5) grid (4,3);
\draw [fill=black, fill opacity = 0.1] (4,3) rectangle (6,1);
\draw[help lines] (4,3) grid (6,1);
\node at (6.5,9) {$\cdot$};
\draw (7,12) rectangle (8,11);
\node at (7.5,11.5) {$x_1$};
\draw (7,11) rectangle (8,10);
\node at (7.5,10.5) {$x_2$};
\draw (7,10) rectangle (8,9);
\node at (7.5,9.5) {$x_3$};
\draw (7,9) rectangle (8,7);
\node at (7.5,8.5) {$x_4'$};
\node at (7.5,7.5) {$x_4''$};
\node at (8.5,9) {$=$};
\draw (9,12) rectangle (10,8);
\node at (9.5,10) {$y_1$};
\draw (9,8) rectangle (10,5);
\node at (9.5,6.5) {$y_2$};
\draw (9,5) rectangle (10,3);
\node at (9.5,4) {$y_3$};
\draw (9,3) rectangle (10,1);
\node at (9.5,2) {$y_4$};
\node[anchor=west] at (10,2) {$(=y_4' + y_4'')$};

\draw [ultra thick]  (1,12)  rectangle (2,8);
\node at (1.5,10) {$1$};

\draw [ultra thick]  (2,8)  rectangle (3,5);
\node at (2.5,6.5) {$2$};

\draw [ultra thick]  (3,5)  rectangle (4,3);
\node at (3.5,4) {$3$};

\draw [ultra thick]  (4,3)  rectangle (5,1);
\node at (4.5,2) {$4$};

\draw [ultra thick]  (5,3)  rectangle (6,1);
\node at (5.5,2) {$5$};
\end{scope}

\begin{scope}[shift={(2,12)}]
\draw (17,3) rectangle (21,12);
\draw [fill=black, fill opacity = 0.1] (17,12) rectangle (19,8);
\draw[help lines] (17,12) grid (19,8);
\draw [fill=black, fill opacity = 0.1] (19,8) rectangle (20,5);
\draw[help lines] (19,8) grid (20,5);
\draw [fill=black, fill opacity = 0.1] (20,5) rectangle (21,3);
\draw[help lines] (20,5) grid (21,3);
\node at (21.5,10) {$\cdot$};
\draw (22,12) rectangle (23,10);
\node at (22.5,11) {$u_1$};
\draw (22,10) rectangle (23,9);
\node at (22.5,9.5) {$u_2$};
\draw (22,9) rectangle (23,8);
\node at (22.5,8.5) {$u_3$};
\node at (23.5,10) {$=$};
\draw (24,12) rectangle (25,8);
\node at (24.5,10) {$v_1$};
\draw (24,8) rectangle (25,5);
\node at (24.5,6.5) {$v_2$};
\draw (24,5) rectangle (25,3);
\node at (24.5,4) {$v_3$};

\end{scope}

\begin{scope}[shift={(1,0)}]
\draw (1,4) rectangle (9,12);
\draw [fill=black, fill opacity = 0.1] (1,12) rectangle (5,8);
\draw[help lines] (1,12) grid (5,8);
\draw [ultra thick]  (1,12)  rectangle (2,8);
\node at (1.5,10) {$1$};

\draw [ultra thick]  (2,12)  rectangle (3,9);
\node at (2.5,10.5) {$2$};

\draw [fill=black, fill opacity = 0.1] (5,8) rectangle (7,6);
\draw[help lines] (5,8) grid (7,6);
\draw [ultra thick]  (5,8)  rectangle (6,6);
\node at (5.5,7) {$3$};

\draw [ultra thick]  (6,8)  rectangle (7,6);
\node at (6.5,7) {$4$};

\draw [fill=black, fill opacity = 0.1] (7,6) rectangle (9,4);
\draw[help lines] (7,6) grid (9,4);
\draw [ultra thick]  (7,6)  rectangle (8,4);
\node at (7.5,5) {$5$};

\node at (9.5, 8) {$\cdot$};

\draw (10,4) rectangle (18,12);
\draw [fill=black, fill opacity = 0.1] (10,12) rectangle (14,8);
\draw [fill=black, fill opacity = 0.1] (14,8) rectangle (17,6);
\draw [fill=black, fill opacity = 0.1] (17,6) rectangle (18,4);

\draw (10,12) rectangle (11,11);
\node at (10.5,11.5) {$x_1$};
\draw (11,11) rectangle (12,10);
\node at (11.5,10.5) {$x_2$};
\draw (12,12) rectangle (13,10);
\node at (12.5,11) {$u_1$};
\draw (13,10) rectangle (14,9);
\node at (13.5,9.5) {$u_2$};

\draw (14,8) rectangle (15,7);
\node at (14.5,7.5) {$x_3$};
\draw (15,7) rectangle (16,6);
\node at (15.5,6.5) {$x_4'$};
\draw (16,8) rectangle (17,7);
\node at (16.5,7.5) {$u_3$};

\draw (17,6) rectangle (18,5);
\node at (17.5,5.5) {$x_4''$};

\node at (18.5, 8) {$=$};

\draw (19,4) rectangle (27,12);
\draw [fill=black, fill opacity = 0.1] (19,12) rectangle (23,8);
\draw [fill=black, fill opacity = 0.1] (23,8) rectangle (26,6);
\draw [fill=black, fill opacity = 0.1] (26,6) rectangle (27,4);

\draw (19,12) rectangle (20,8);
\node at (19.5,10) {$y_1$};
\draw (20,12) rectangle (21,9);
\node at (20.5,10.5) {$y_2$};
\draw (21,12) rectangle (22,8);
\node at (21.5,10) {$v_1$};
\draw (22,12) rectangle (23,9);
\node at (22.5,10.5) {$v_2$};

\draw (23,8) rectangle (24,6);
\node at (23.5,7) {$y_3$};
\draw (24,8) rectangle (25,6);
\node at (24.5,7) {$y_4'$};

\draw (25,8) rectangle (26,6);
\node at (25.5,7) {$v_3$};

\draw (26,6) rectangle (27,4);
\node at (26.5,5) {$y_4''$};

\end{scope}

\end{tikzpicture}
\end{center}
\caption{How the $f_\ell$ functions are
  defined and used. The bold columns of the block-diagonal
  matrix in the upper-left are associated to the columns of the target
  block-diagonal matrix on the bottom-left. The columns of the block-diagonal
  matrix in the upper-right are also associated the manner described
  in Section \ref{sec:labeling},
  although this association is not shown in the figure. We see that
  the two matrix-vector multiplications at the top can be combined
  into the single matrix product on the bottom, provided that similarly labeled
  entries of the two source matrices are guaranteed to contain
  identical values. Unlabeled cells of the middle-bottom matrix
  contain zeros. 
  Note that in the bottom-right matrix each
  segment of the original vectors $y$ and $v$ may be padded up to
  twice its original length (but not more), and it may be repeated up
  to twice and summed (as $y_4'$ and $y_4''$ are) if the columns of the associated block are mapped to two different
  blocks in the target matrix. More than two repetitions are not possible because
  the source blocks all have at most as many columns as rows.}
\label{fig:labeling}
\end{figure}
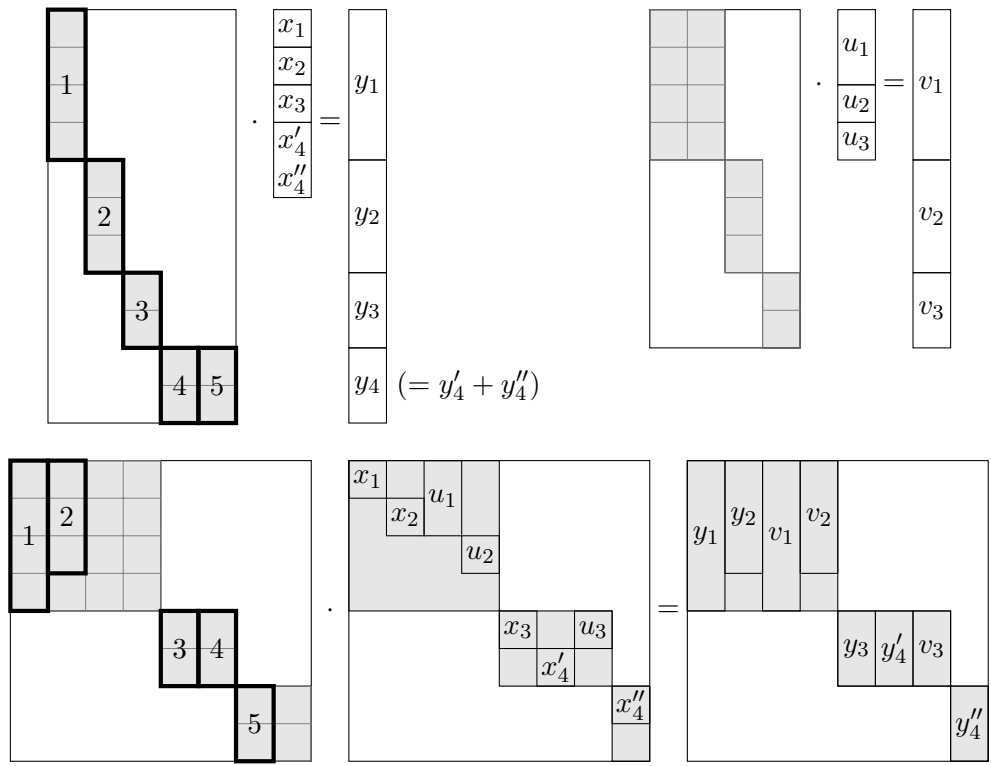

We will describe functions $f_\ell$ associating the entries of a block-diagonal
matrix of this format (which depends on $\ell$)  with a {\em target} block-diagonal matrix 
whose format is fixed as follows: 
\[\begin{array}{lcl}
2\cdot |H/N| & \mbox{ blocks of size }  & 1 \times 1 \\
 \left \lceil 2\cdot |H/N|/4 \right \rceil & \mbox{ blocks of size }  & 2 \times 2 \\
 \left \lceil 2\cdot |H/N|/16 \right \rceil & \mbox{ blocks of size }  & 4 \times 4 \\
 & \vdots  & \\
 \left \lceil 2\cdot |H/N|/2^{2i} \right \rceil & \mbox{ blocks of size }  & 2^i \times 2^i \\
 & \vdots  & \\
 2 & \mbox{ blocks of size }   & 2^{\left \lceil \log_2 |H/N| \right
                               \rceil} \times 2^{\left \lceil \log_2
                               |H/N| \right \rceil}
\end{array}\]
Note that the number of entries of this target matrix
is $O(|H/N|\log |H/N|)$, and this will be our $r$. The
asssociation specifying the map $f_\ell$ is quite simple: we take one
column at a time of the source
block-diagonal matrix, and if it has height $w$, we associate it,
top-aligned, with the next-available
column among the blocks of size $2^i \times 2^i$, for the $i$ such that $2^i/2 < w \le
2^i$. See Figure \ref{fig:labeling}. Since there can be at most $|H/N|/w <  2|H/N|/2^{i}$ columns of height
$w$ in the source matrix (which has $|H/N|$ entries in total), and the
target block-diagonal matrix has at least $2\cdot |H/N|/2^i$ columns of width
$i$, this association is possible.

We will use these mappings when applying Theorem \ref{thm:sparse} to
facilitate an efficient ``lift'' from an intermediate
representation to the final $G$-DFT. The key benefit of the mappings
is that they allow us to combine several matrix-vector products with
incompatible formats into
one, as illustrated in Figure \ref{fig:labeling}. In order to be able
to speak precisely about this combined object, we make the
following definition:
\begin{definition}[parent matrix]
\label{defn:parent}
Given a partition of $\irr(H)$ into sets $S_\ell$, matrices 
$A^\sigma$ with dimensions $\dim(\sigma)/d_\sigma\times e_\sigma $
(one for each $\sigma \in \irr(H)$),
and functions $f_\ell$ as above, satisfying
\[f_\ell(\sigma,i,j) = f_{\ell'}(\sigma', i',j') \;\;\; \Rightarrow
  \;\;\; A^{\sigma}[i,j] = A^{\sigma'}[i',j'],\]
define the {\em parent matrix} of
the $A^\sigma$ to be the matrix with the format of the
target matrix above, and with entry $(x,y)$ equal to the
value of $A^\sigma[i,j]$ if there exists $\ell$ for which $f_\ell(\sigma,
i,j) = (x, y)$, and zero otherwise.
\end{definition}
See Figure \ref{fig:parent} for an example parent matrix.
\begin{figure}[t]
\begin{center}
\begin{tikzpicture}[scale = 0.5]

\begin{scope}[shift={(-2,12)}]
\draw (1,1) rectangle (6,12);
\draw [fill=black, fill opacity = 0.1] (1,12) rectangle (2,8);
\draw[help lines] (1,12) grid (2,8);
\node at (1.5,11.5) {\small a};
\node at (1.5,10.5) {\small b};
\node at (1.5,9.5) {\small c}; 
\node at (1.5,8.5) {\small d};
\draw [fill=black, fill opacity = 0.1] (2,8) rectangle (3,5);
\draw[help lines] (2,8) grid (3,5);
\node at (2.5,7.5) {\small e};
\node at (2.5,6.5) {\small f};
\node at (2.5,5.5) {\small g}; 
\draw [fill=black, fill opacity = 0.1] (3,5) rectangle (4,3);
\draw[help lines] (3,5) grid (4,3);
\node at (3.5,4.5) {\small h};
\node at (3.5,3.5) {\small i};
\draw [fill=black, fill opacity = 0.1] (4,3) rectangle (6,1);
\draw[help lines] (4,3) grid (6,1);
\node at (4.5,2.5) {\small j};
\node at (4.5,1.5) {\small k};
\node at (5.5,2.5) {\small l};
\node at (5.5,1.5) {\small m};
\node at (3.5,0) {$A^{\sigma}$};
\end{scope}

\begin{scope}[shift={(13,15)}]
\draw (1,1) rectangle (9,9);
\draw [fill=black, fill opacity = 0.1] (1,9) rectangle (5,5);
\draw[help lines] (1,9) grid (5,5);
\node at (1.5,8.5) {\small a};
\node at (1.5,7.5) {\small b};
\node at (1.5,6.5) {\small c}; 
\node at (1.5,5.5) {\small d};
\node at (2.5,8.5) {\small e};
\node at (2.5,7.5) {\small f};
\node at (2.5,6.5) {\small g}; 
\node at (2.5,5.5) {\small n};
\node at (3.5,8.5) {\small p};
\node at (3.5,7.5) {\small q};
\node at (3.5,6.5) {\small r}; 
\draw [fill=black, fill opacity = 0.1] (5,5) rectangle (7,3);
\draw[help lines] (5,5) grid (7,3);
\node at (5.5,4.5) {\small h};
\node at (5.5,3.5) {\small i};
\node at (6.5,4.5) {\small j};
\node at (6.5,3.5) {\small k};
\draw [fill=black, fill opacity = 0.1] (7,3) rectangle (9,1);
\draw[help lines] (7,3) grid (9,1);
\node at (7.5,2.5) {\small l};
\node at (7.5,1.5) {\small m};
\node at (5,0) {parent of $\{A^{\sigma'}, A^{\sigma}\}$};
\end{scope}

\begin{scope}[shift={(6,14)}]
\draw (1,1) rectangle (5,10);
\draw [fill=black, fill opacity = 0.1] (1,10) rectangle (3,6);
\draw[help lines] (1,10) grid (3,6);
\node at (1.5,9.5) {\small a};
\node at (1.5,8.5) {\small b};
\node at (1.5,7.5) {\small c}; 
\node at (1.5,6.5) {\small d};
\node at (2.5,9.5) {\small e};
\node at (2.5,8.5) {\small f};
\node at (2.5,7.5) {\small g}; 
\node at (2.5,6.5) {\small n};
\draw [fill=black, fill opacity = 0.1] (3,6) rectangle (4,3);
\draw[help lines] (3,6) grid (4,3);
\node at (3.5,5.5) {\small p};
\node at (3.5,4.5) {\small q};
\node at (3.5,3.5) {\small r}; 
\draw [fill=black, fill opacity = 0.1] (4,3) rectangle (5,1);
\draw[help lines] (4,3) grid (5,1);
\node at (4.5,2.5) {\small h};
\node at (4.5,1.5) {\small i};
\node at (3,0) {$A^{\sigma'}$};
\end{scope}

\end{tikzpicture}
\end{center}
\caption{An example parent matrix. Unlabeled entries are zero.}
\label{fig:parent}
\end{figure}
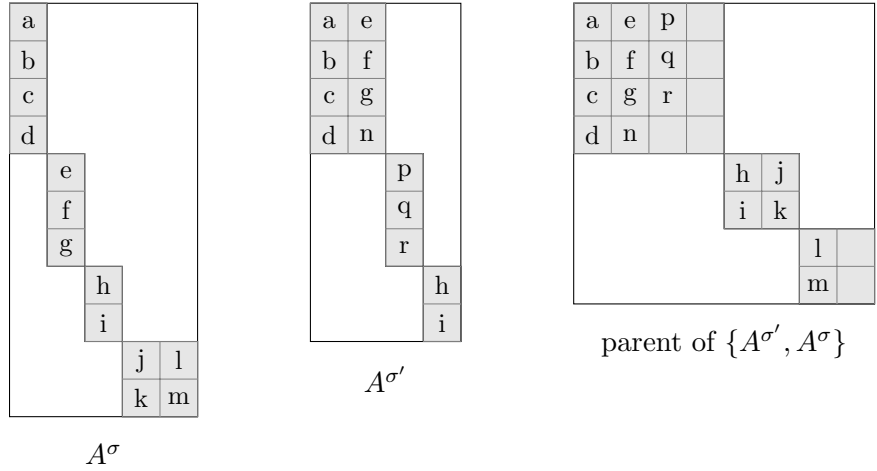

\subsubsection{Computing the intermediate representation}
We are at the point now where we can compute the intermediate
representation, which we then lift to the final $G$-DFT, making
critical use of
the just-described labellings $f_\ell$. The setup is as
follows: $H$ and $K$ are proper subgroups of group $G$, and $H \cap K = N$
is normal in $G$. Let $X$ be a system of distinct coset representatives of $N$ in $H$ and
let $Y$ be a system of distinct coset representatives of $N$ in $K$. Thus $H = XN$
and $K = NY$. Note that $HK = XNY$ with
uniqueness of expression. 

When applying the triple subgroup
reduction in our final result, it will happen that 
\[\frac{|G|}{|HK|} = \frac{|G||N|}{|H||K|}\]
is negligible, and notice that in this case, if $H$-DFTs, $K$-DFTs, and $N$-DFTS 
have nearly-linear algorithms, then indeed the cost of applying
the next lemma is nearly-linear in $|G|$ as desired. 

\begin{lemma}
\label{lem:intermediate}
With $|Y|$ many $H$-DFTs, $O(|H/N|\log |H/N|)\cdot|Y|$ many inverse $N$-DFTs, and $O(|H/N|\log|H/N|)$
many $K$-DFTs, we can compute, from $\alpha \in \C[G]$ supported on $HK$, the
following expression:
\begin{equation}
\label{eq:complicated1}
\sum_{n \in N}\sum_{y \in Y} \bigoplus_{\tau \in
  \irr(K)} P_{n, y} \otimes \tau(ny)^T
\end{equation}
where $P_{n,y}$ is the parent matrix of the matrices $\{M_{n,y}^\sigma
: \sigma \in \irr(H)\}$, and for each $\sigma, y$, the
$M_{n,y}^\sigma$ satisfy (with respect to an $N$-adapted basis for $\irr(H)$):
\begin{equation}
\label{eq:complicated2}
\sum_{n \in N} (M^{\sigma}_{n,y} \otimes J^{\sigma})\sigma(n) =
\sum_{h \in H}
\alpha_{hy}\sigma(h).
\end{equation}
where $J^{\sigma}$ is the $\dim(\sigma)/e_\sigma  \times d_\sigma$
matrix $(I_{d_\sigma} | I_{d_\sigma} | \cdots | I_{d_\sigma})$ as in
Theorem \ref{thm:sparse}.
\end{lemma}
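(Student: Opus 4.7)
The plan follows the outline given in the paragraph preceding the lemma: perform $H$-DFTs, apply Theorem \ref{thm:sparse} to convert to the sparse intermediate form, then perform $K$-DFTs on the parent-matrix entries. I carry this out in three stages.

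\emph{Stage 1 (the $H$-DFTs).} For each $y \in Y$, regard $\alpha$ restricted to the right coset $Hy$ as an element of $\C[H]$ via $hy \leftrightarrow h$ and compute one $H$-DFT using an $N$-adapted basis for each $\sigma \in \irr(H)$. This yields
\[s_y^\sigma \;=\; \sum_{h \in H} \alpha_{hy}\,\sigma(h), \qquad \sigma \in \irr(H),\]
at a total cost of $|Y|$ many $H$-DFTs, accounting for the first reported term.

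\emph{Stage 2 (the inverse $N$-DFTs).} For each $y \in Y$ independently, apply Theorem \ref{thm:sparse} to $M = \bigoplus_{\sigma} s_y^\sigma$, using the labelings $f_\ell$ from Section \ref{sec:labeling} with $r = O(|H/N|\log|H/N|)$. This produces matrices $M^\sigma_{n,y}$ satisfying Equation (\ref{eq:complicated2}) together with the labeling-compatibility condition, at a cost of $r$ inverse $N$-DFTs per value of $y$, hence $O(|H/N|\log|H/N|)\cdot|Y|$ inverse $N$-DFTs in total. The parent matrix $P_{n,y}$ is then assembled from the $M^\sigma_{n,y}$ by copying entries as prescribed by Definition \ref{defn:parent}; this is free in the arithmetic model.

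\emph{Stage 3 (the $K$-DFTs and final assembly).} Since $N \times Y \to K$, $(n,y) \mapsto ny$, is a bijection, I rewrite the desired output (\ref{eq:complicated1}) as
\[\bigoplus_{\tau \in \irr(K)} \sum_{k \in K} P_k \otimes \tau(k)^T,\]
where $P_k := P_{n,y}$ when $k = ny$. Crucially, the parent-matrix format is fixed (depending only on $H/N$), so $P_k[x,z]$ is well-defined as $k$ varies. Expanding the tensor product, the $(x,z)$-block of the $\tau$-summand is $\bigl(\sum_{k \in K} P_k[x,z]\,\tau(k)\bigr)^T$, i.e., the transpose of the $\tau$-block of a single $K$-DFT applied to the scalar input $\{P_k[x,z]\}_{k \in K}$. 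Since the parent matrix has $O(|H/N|\log|H/N|)$ entry positions $(x,z)$, performing one $K$-DFT per such position, transposing the resulting blocks (free), and placing them into the block-diagonal output produces exactly (\ref{eq:complicated1}).

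The main obstacle is really just careful bookkeeping: one must verify that the entry-copying from the $M^\sigma_{n,y}$ into $P_{n,y}$ (Stage 2) and the block-placement of $K$-DFT outputs back into the tensor-product form (Stage 3) are both genuinely free in the arithmetic model. Both facts follow immediately once one observes that the parent-matrix format is independent of $(n,y)$, so the same set of entry positions indexes both the inputs and the outputs of the $K$-DFTs.
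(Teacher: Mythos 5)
Your proposal is correct and follows essentially the same three-stage approach as the paper's proof: $H$-DFTs on the cosets $Hy$, inverse $N$-DFTs via Theorem \ref{thm:sparse} with the labelings of Section \ref{sec:labeling}, then $K$-DFTs on the scalar entries of the parent matrices using the bijection $N \times Y \to K$. The extra bookkeeping in Stage 3 (explicitly invoking uniqueness of expression in $K = NY$ and noting the fixed parent-matrix format) just spells out what the paper leaves implicit.
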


\begin{proof}
First, compute for each $y \in Y$ and $\sigma \in \irr(H)$ the matrices
\[M_y^\sigma = \sum_{h \in H} \alpha_{hy} \sigma(h),\]
using $|Y|$ different $H$-DFTs. 
Next, apply Theorem \ref{thm:sparse}, once for each $y$, to the matrices 
\[\bigoplus_{\sigma \in  \irr(H)} M_y^\sigma \in \bigoplus_{\sigma \in
    \irr(H)} \C^{\dim(\sigma) \times \dim(\sigma)},\]
together with the labelings $f_\ell$ from Section \ref{sec:labeling}, to obtain matrices $M^{\sigma}_{n,y} \in
\C^{\dim(\sigma)/d_\sigma\times e_\sigma}$
for which 
\[\sum_{n \in N} (M^{\sigma}_{n,y} \otimes J^{\sigma})\sigma(n) = M^\sigma_y,\]
at a cost of $O(|H/N|\log |H/N|)\cdot|Y|$ many inverse $N$-DFTs. Note that these
$M^\sigma_{n,y}$ satisfy Equation (\ref{eq:complicated2}). Let
$P_{n,y}$ be the parent matrix of the matrices $\{M_{n,y}^\sigma
: \sigma \in \irr(H)\}$.

For each $(i,j)$, the vector $\beta$ with
$\beta[ny] = P_{n,y}[i,j]$ is an element of $\C[K]$ and we
perfom a $K$-DFT on it;
this entails computing at most $O(|H/N|\log |H/N|)$ different $K$-DFTs
because this is the number of entries in the blocks of the block-diagonal matrices $P_{n,y}$. At this point we hold, in the aggregate, all of the entries of Expression
(\ref{eq:complicated1}) in the statement of the lemma, and the proof
is complete. 
\end{proof}

\subsubsection{Lifting to a $G$-DFT}
\label{lift}

In this section we show how to efficiently lift the intermediate
representation, Expression \ref{eq:complicated1} 
computed via Lemma \ref{lem:intermediate},
to a $G$-DFT. We continue with the notation of the
previous section.

Let $\irr^*(H)$ denote the {\em multiset} of irreps of $H$ that occur in the
restrictions of the irreps of $G$ to $H$ (with the correct
multiplicities), and similarly let $\irr^*(K)$ denote the
{\em multiset} of irreps of $K$ that occur in the restrictions of the irreps
of $G$ to $K$. Let $S$ and $T$ be the change of basis
matrices that satisfy:
\begin{eqnarray*}
S\left ( \bigoplus_{\sigma \in \irr^*(H)} \sigma(h) \right ) S^{-1} &  = & 
\bigoplus_{\rho \in \irr(G)}\rho(h) \;\;\;\; \forall h \in H \\
T\left (\bigoplus_{\tau \in \irr^*(K)} \tau(k) \right ) T^{-1} & = &
\bigoplus_{\rho \in \irr(G)}\rho(k) \;\;\;\; \forall k \in K.
\end{eqnarray*}
We further specify that $S$ should be with respect to an $N$-adapted
basis for $\irr(H)$. 

Notice that for $n \in N = H \cap K$, we have:
\begin{equation*}
S\left ( \bigoplus_{\sigma \in \irr^*(H)} \sigma(n) \right ) S^{-1}
= T\left (\bigoplus_{\tau \in \irr^*(K)} \tau(n) \right ) T^{-1},
\end{equation*} or equivalently

\begin{equation}
\label{eq:slide}
\left ( \bigoplus_{\sigma \in \irr^*(H)} \sigma(n) \right ) S^{-1}T
= S^{-1}T\left (\bigoplus_{\tau \in \irr^*(K)} \tau(n) \right ),
\end{equation} 
a fact we will use shortly.

A $G$-DFT with input $\alpha$ supported on $HY = HK$ is the expression:
\begin{eqnarray*}
\sum_{\substack{h \in H\\ y \in Y}} \alpha_{hy}
\bigoplus_{\rho \in \irr(G)} \rho(hy) & = & 
\sum_{y \in Y} \left (\sum_{h \in H}  \alpha_{hy} \bigoplus_{\rho \in
    \irr(G)}\rho(h)\right )\cdot \left (\bigoplus_{\rho \in \irr(G)}
  \rho(y) \right )\\
& = & \sum_{y \in Y} S\left (\sum_{h \in H}  \alpha_{hy} \bigoplus_{\sigma \in
    \irr^*(H)}\sigma(h)\right )S^{-1}T\left (\bigoplus_{\tau \in
      \irr^*(K)} \tau(y) \right )T^{-1}
\end{eqnarray*}
Now for each $y \in Y$, the left-most parenthesized expression is an
$H$-DFT, with certain blocks repeated. By Equation (\ref{eq:complicated2}) in the
statement of Lemma \ref{lem:intermediate},
each such expression can be rewritten in terms of matrices $M_{n,y}^\sigma$, yielding:
\begin{eqnarray}
\sum_{\substack{h \in H\\ y \in Y}} \alpha_{hy}
\bigoplus_{\rho \in \irr(G)} \rho(hy) & = & \sum_{\substack{y \in Y\\ n \in N}} S\left (
  \bigoplus_{\sigma \in \irr^*(H)}(M_{n,y}^{\sigma} \otimes J^{\sigma})\sigma(n)\right )S^{-1}T\left (\bigoplus_{\tau \in
  \irr^*(K)} \tau(y) \right )T^{-1} \nonumber \\
& = & \sum_{\substack{y \in Y\\ n \in N}} S\underbrace{\left (
  \bigoplus_{\sigma \in \irr^*(H)}(M_{n,y}^{\sigma} \otimes J^{\sigma})\right )S^{-1}T\left (\bigoplus_{\tau \in
    \irr^*(K)} \tau(ny) \right )}_{(*)}T^{-1} \label{eq:star-sum}
\end{eqnarray} 
where the last line invoked Equation (\ref{eq:slide}) to move
$\sigma(n)$ past $S^{-1}T$.

We now focus on Expression $(*)$.  By Proposition \ref{prop:mat-mult}
we can express Expression $(*)$ as
\begin{equation}
\left (\bigoplus_{\sigma \in \irr^*(H), \tau \in \irr^*(K)}
  \left ((M_{n,y}^{\sigma} \otimes J^{\sigma}) \otimes \tau(ny)^T\right
  )\right ) \cdot \mbox{vec}(S^{-1}T)= \mbox{vec}(*). 
\label{eq:vectorized}
\end{equation}
We next apply two types of simplifications to the block-diagonal matrix
on the left. In each, we observe that 
equalities among blocks allow us to simplify that
block-diagonal matrix, at the expense of arranging portions of
$\mbox{vec}{S^{-1}T}$ and $\mbox{vec}{(*)}$ into block-diagonal matrices and summing certain entries. The first such
observation is that computing 
\[\left (\begin{array}{c|c} A \\ \hline & A \end{array} \right ) \cdot
  \left (\begin{array}{c} x_1 \\ \hline x_2 \end{array} 
\right ) =  \left (\begin{array}{c} y_1 \\ \hline y_2 \end{array} 
\right ) \]
is equivalent to computing $A \cdot (x_1 | x_2) = (y_1 | y_2)$. The second observation is that computing
\[ (A | A) \cdot \left
    ( \begin{array}{c} x_1 \\ x_2 \end{array} \right ) = y \]
is equivalent to computing $A\cdot (x_1 + x_2) = y$. 

Using the first observation we can thus
simplify Equation \ref{eq:vectorized} to:
\[\left (\bigoplus_{\sigma \in \irr(H), \tau \in \irr(K)}
  \left ((M_{n,y}^{\sigma} \otimes J^{\sigma}) \otimes \tau(ny)^T\right
  )\right )\cdot X_0 = Y_0,\]
where $X_0$ is a block-diagonal matrix whose entries coincide with the
entries of $S^{-1}T$.
Next, we notice that $J^\sigma = I_{d_\sigma} \otimes (1,1, \ldots
1)$.
The first observation then allows us to simplify Equation \ref{eq:vectorized}
futher to:
\[\left (\bigoplus_{\sigma \in \irr(H), \tau \in \irr(K)}
  \left ((M_{n,y}^{\sigma} \otimes (1,1, \ldots
1)) \otimes \tau(ny)\right
)\right )\cdot X_1 = Y_1\]
where again the entries of $X_1$ coincide with the
entries of $S^{-1}T$, 
and the second observation allows us to simplify to:
\begin{equation}
\label{eq:before-parent}
\left (\bigoplus_{\sigma \in \irr(H), \tau \in \irr(K)}
  M_{n,y}^{\sigma} \otimes \tau(ny)^T \right) \cdot X_2 = Y_2,
\end{equation}
where now $X_2$ is a block-diagonal matrix whose entries
are sums of entries of $S^{-1}T$. 

As in the statement of Lemma \ref{lem:intermediate}, for each $n, y$, let $P_{n,y}$ be the parent
matrix of the matrices $\{M_{n,y}^{\sigma} : \sigma \in \irr(H)\}$. We
can rewrite Expression (\ref{eq:before-parent}) as 
\begin{equation}
\label{eq:after-parent}
\left (\bigoplus_{\tau \in \irr(K)}
    P_{n,y}\otimes \tau(ny)^T \right ) \cdot X_3 = Y_3,
\end{equation}
where $X_3$ is a block-diagonal matrix whose entries are sums of
entries of $S^{-1}T$.

The square blocks of the 
block-diagonal matrix
\[\left (\bigoplus_{\tau \in \irr(K)}
    P_{n,y}\otimes \tau(ny)^T \right )\] have 
dimensions $a_i$ with the property that \[\sum_i a_i^2 = O(|H/N|\log
|H/N|)\cdot |K|,\]
using our earlier accounting for the block sizes of a parent matrix,
together with the fact that  $\sum_{\tau \in \irr(K)} \dim(\tau)^2 = |K|$.  
Each $a_i \times a_i$ block is multiplied by an $a_i
\times w_i$ block of $X_3$, to yield an $a_i \times w_i$ block of the
product matrix $Y_3$. We now argue that the $w_i$ satisfy $\sum_i a_iw_i =
4|G|$. Each of the two transformations applied to obtain
block-diagonal matrices $Y_0, Y_1$ and then $Y_2$ preserve the number
of entries of the result matrix; these matrices therefore have $|G|$
entries in the blocks. The final transformation results in a
block-diagonal matrix $Y_3$ which may have {\em more} entries than
$|G|$, but this number can be larger by only a factor of four, as illustrated in Figure
\ref{fig:labeling}. This is because each column of a block of $Y_2$
may need to be padded to at most twice its original length, and
repeated up to two times (and no more, because the blocks of the
$M^\sigma_{n,y}$ have no more columns than rows, and thus can spill over at most two blocks
in the parent matrix). Thus the number of entries in the blocks of
$Y_3$ which equals $\sum_i a_iw_i$, is
at most $4|G|$ as stated. 

We conclude that the block-matrix multiplication in
Expression (\ref{eq:after-parent}) can be
performed efficiently as summarized in the following lemma.

\begin{lemma}
\label{lem:lift}
The map from 
\[\sum_{n \in N}\sum_{y \in Y} \bigoplus_{\tau \in \irr(K)}
    P_{n,y}\otimes \tau(ny)^T\]
as computed from input $\alpha$ supported on $HY = HK$ in Lemma \ref{lem:intermediate},
to a $G$-DFT with, can  
 be computed at a cost of
$O(|G|^{\omega/2 + \epsilon} )$
operations, for all $\epsilon > 0$.  
\end{lemma}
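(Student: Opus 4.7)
The plan is to account for two contributions to the total cost: the block-matrix multiplication in Equation~(\ref{eq:after-parent}), and the outer conjugation by $S$ and $T^{-1}$ that lifts each $(*)$ to the corresponding block of the $G$-DFT (this outer conjugation was set up in the derivation but its cost is not yet analyzed).

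For the block multiplication, each $i$-th block pair is an $a_i\times a_i$ by $a_i\times w_i$ product, and I would bound its cost by $O((a_i^2 + a_i w_i)^{\omega/2 + \epsilon})$ using fast matrix multiplication as a black box. One verifies this by cases: if $w_i \leq a_i$, pad the right factor to $a_i\times a_i$ for cost $O(a_i^{\omega+\epsilon}) \leq O((a_i^2)^{\omega/2+\epsilon})$; if $w_i \geq a_i$, perform $\lceil w_i/a_i\rceil$ square $a_i\times a_i$ multiplications for cost $O(a_i^{\omega-1+\epsilon} w_i) \leq O((a_i w_i)^{\omega/2+\epsilon})$. Since $\omega/2 + \epsilon \geq 1$, the map $x \mapsto x^{\omega/2+\epsilon}$ is superadditive on nonnegative reals, and the text has already established $\sum_i a_i^2 \leq O(|HK|\log|HK|) \leq O(|G|\log|G|)$ and $\sum_i a_i w_i \leq 4|G|$. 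Hence
\[\sum_i \bigl(a_i^2 + a_i w_i\bigr)^{\omega/2+\epsilon} \leq \Bigl(\sum_i (a_i^2 + a_i w_i)\Bigr)^{\omega/2+\epsilon} = O\bigl(|G|\log|G|\bigr)^{\omega/2+\epsilon} = O\bigl(|G|^{\omega/2+\epsilon'}\bigr)\]
for any $\epsilon' > \epsilon$, absorbing the polylog factor.

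The block multiplication's output $Y_3$ encodes $\mathrm{vec}(*)$ for each $(n,y)$ up to the padding/repetition indicated in Figure~\ref{fig:labeling}; reshaping back to $(*)$ and summing over $(n,y)$ costs only $O(|G|)$ operations (equivalently, one can first form $\sum_{n,y} \bigoplus_\tau P_{n,y}\otimes \tau(ny)^T$ and perform a single block multiplication, which has identical cost). What remains is to apply $S$ on the left and $T^{-1}$ on the right to this sum. The crucial observation is that $S$ and $T$ are themselves block-diagonal with respect to $\irr(G)$: the change of basis carrying $\bigoplus_{\sigma \in \irr^*(H)}\sigma(h)$ to $\bigoplus_{\rho \in \irr(G)}\rho(h)$ preserves the $\rho$-isotypic decomposition, so we can take $S = \bigoplus_\rho S_\rho$ with $S_\rho \in \C^{\dim(\rho)\times \dim(\rho)}$ and similarly for $T$. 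The outer conjugation then decomposes as one triple product of $\dim(\rho)\times\dim(\rho)$ matrices per $\rho$, at total cost $\sum_\rho O(\dim(\rho)^{\omega+\epsilon}) = O(|G|^{\omega/2+\epsilon})$ by Proposition~\ref{prop:prelim-inequality}.

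I expect the main obstacle to be packaging the per-block cost bound $O((a_i^2 + a_i w_i)^{\omega/2+\epsilon})$ in a form clean enough to combine via superadditivity against the given sum bounds; verifying the block-diagonality of $S$ and $T$ and handling the reshaping of $Y_3$ back into $(*)$ are mostly bookkeeping. Adding the two contributions yields the claimed $O(|G|^{\omega/2+\epsilon})$ cost.
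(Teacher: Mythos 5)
Your proof is correct and follows essentially the same strategy as the paper: bound the cost of each $\langle a_i,a_i,w_i\rangle$ multiplication, then aggregate using $\sum_i a_i^2 = O(|G|\log|G|)$ and $\sum_i a_iw_i \le 4|G|$, and separately account for the conjugation by the block-diagonal $S$ and $T^{-1}$ via Proposition~\ref{prop:prelim-inequality}. The only meaningful difference is cosmetic: you combine the per-block costs using superadditivity of $x \mapsto x^{\omega/2+\epsilon}$, whereas the paper factors out $a_{\max}^{\omega-2+\epsilon}$ (and uses $a_{\max} \le L^{1/2}$) to reduce to a linear sum; both yield the same $O(|G|^{\omega/2+\epsilon})$ bound after absorbing polylogarithms.
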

\begin{proof}
We describe how to map a summand $\bigoplus_{\tau \in \irr(K)}
    P_{n,y}\otimes \tau(ny)^T$ to the corresponding summand of
    Expression (\ref{eq:star-sum}). This map will be {\em linear} and will not
    depend on $n,y$, so we apply it once to the entire sum computed by
    Lemma \ref{lem:intermediate}, to obtain
    Expression (\ref{eq:star-sum}), which is the promised $G$-DFT. 

We need to perform matrix multiplications of format $\langle a_i,a_i,
w_i\rangle$, and we know that $\sum_i a_i^2 = O(|H/N|\log
|H/N|)\cdot |K|= L$ and $\sum_i a_iw_i = 4|G|$.
The cost of such a multiplication is at most $\max(O(a_i^{\omega + \epsilon}),
O(a_i^{\omega-1 + \epsilon}w_i))$ for all $\epsilon > 0$.  Replacing
the maximum with a sum, and letting $a_{\max} = \max_i a_i$, we obtain an
upper bound on the number of operations of 
\begin{equation}
\label{eq:lift}
\sum_i O(a_i^{\omega+\epsilon}) + 
O(a_i^{\omega-1 + \epsilon}w_i) = O(a_{\max}^{\omega - 2 + \epsilon}) \sum_i
a_i^2 +a_iw_i \le L^{(\omega - 2 + \epsilon)/2}\cdot (L+4|G|).
\end{equation}
We
need to pre-multiply by $S$ and post-multiply by $T^{-1}$ to obtain a
summand of Expression (\ref{eq:star-sum}). Both $S$ and $T^{-1}$ are
block-diagonal with one block for each $\rho \in \irr(G)$, with
dimension $\dim(\rho)$. Thus the cost of this final pre- and post-
multiplication is 
\[\sum_{\rho \in \irr(G)}O(\dim(\rho)^{\omega + \epsilon})\]
which is at most $O(|G|^{\omega/2 + \epsilon})$ by Proposition \ref{prop:prelim-inequality}
with $\alpha = \omega + \epsilon$. 
The theorem follows from the fact that $|H||K|/|N| \le |G|$, and thus
Expression (\ref{eq:lift}) is also upper-bounded by $O(|G|^{\omega/2 +
  \epsilon})$ (absorbing logarithmic terms into
$|G|^{\epsilon/2}$). 
\end{proof}

We now have the main theorem putting together the entire triple
subgroup reduction:

\begin{theorem}[triple subgroup reduction]
\label{thm:triple-subgroup}
Let $G$ be a finite group and let $H,K$ be proper subgroups with $N = H \cap
K$ normal in $G$. Then we can compute generalized DFTs with respect to $G$ at
the cost of 
\begin{itemize}
\item $|K|/|N|$ many $H$-DFTs, \\
\item $O(|H||K|/|N|^2\log |H/N|)$ many
inverse $N$-DFTs, \\
\item  $O(|H/N|\log|H/N|)$ many $K$-DFTs, 
\end{itemize}
plus $O(|G|^{\omega/2 +
  \epsilon})$ operations, all repeated $O(|G|\log|G|/|HK|)$ many
times, for all $\epsilon > 0$. 
\end{theorem}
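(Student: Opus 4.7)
The plan is to assemble Theorem \ref{thm:triple-subgroup} from the pieces already in hand: Lemma \ref{lem:intermediate}, Lemma \ref{lem:lift}, and Theorem \ref{thm:trick}. The first two lemmas together already give a generalized DFT with respect to $G$ for the special case of inputs $\alpha$ supported on $HK$; Theorem \ref{thm:trick} then bootstraps this to arbitrary inputs at the claimed multiplicative cost $O(|G|\log|G|/|HK|)$.

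First I would compose Lemma \ref{lem:intermediate} with Lemma \ref{lem:lift}. Given $\alpha \in \C[G]$ supported on $HK = HY$, Lemma \ref{lem:intermediate} computes the intermediate expression
\[\sum_{n \in N}\sum_{y \in Y} \bigoplus_{\tau \in \irr(K)} P_{n,y} \otimes \tau(ny)^T\]
using $|Y|$ many $H$-DFTs, $O(|H/N|\log|H/N|)\cdot|Y|$ many inverse $N$-DFTs, and $O(|H/N|\log|H/N|)$ many $K$-DFTs. Lemma \ref{lem:lift} then transforms this intermediate expression into the desired $G$-DFT using an additional $O(|G|^{\omega/2+\epsilon})$ operations. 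Since $Y$ is a system of coset representatives of $N$ in $K$, we have $|Y| = |K|/|N|$, and therefore the $H$-DFT count is exactly $|K|/|N|$, the inverse $N$-DFT count is $O(|H/N|\log|H/N|)\cdot|K|/|N| = O(|H||K|/|N|^2 \log|H/N|)$, and the $K$-DFT count is $O(|H/N|\log|H/N|)$, matching the theorem statement in the supported-on-$HK$ case.

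Next I would invoke Theorem \ref{thm:trick} with $S = HK$. This theorem ensures that an algorithm computing generalized $G$-DFTs for inputs supported on $S$ with operation count $m$ extends to an algorithm for general inputs at cost $O(m + |G|^{\omega/2+\epsilon})\cdot (|G|\log|G|/|S|)$. Applying this with our $m$ (which is a sum of DFT calls plus $O(|G|^{\omega/2+\epsilon})$ operations) absorbs the additive $O(|G|^{\omega/2+\epsilon})$ into the existing $O(|G|^{\omega/2+\epsilon})$ term and multiplies all three DFT counts and the residual operation count by the single factor $O(|G|\log|G|/|HK|)$, yielding exactly the repetition structure advertised in the theorem.

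There is no real obstacle here beyond bookkeeping; the substantive content — the ``sparse'' decomposition via Theorem \ref{thm:sparse}, the labellings $f_\ell$ of Section \ref{sec:labeling}, and the block-matrix simplifications leading to Equation (\ref{eq:after-parent}) — has already been carried out in Lemmas \ref{lem:intermediate} and \ref{lem:lift}. The one small point I would verify explicitly is that the DFT operations counted in Lemma \ref{lem:intermediate} are precisely of the three types listed (with no hidden extras absorbed into the ``$O(|G|^{\omega/2+\epsilon})$ operations'' bucket that would spoil the clean statement), and that Theorem \ref{thm:trick} is applied so that the repetition factor acts uniformly on DFT counts as well as on the residual arithmetic work.
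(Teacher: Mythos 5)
Your proposal is correct and follows essentially the same route as the paper: compose Lemma \ref{lem:intermediate} with Lemma \ref{lem:lift} to obtain a $G$-DFT for inputs supported on $HK$, note that $|Y|=|K|/|N|$ to translate the DFT counts into the form stated, and then invoke Theorem \ref{thm:trick} with $S=HK$ to obtain the $O(|G|\log|G|/|HK|)$ repetition factor. The paper's own proof is even terser than yours, omitting the bookkeeping you spell out, but there is no difference in substance.
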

\begin{proof}
By Lemma \ref{lem:intermediate} we can compute the intermediate
representation of a $G$-DFT supported on $HK$, and applying the 
map of Lemma \ref{lem:lift} to this intermediate representation yields
a $G$-DFT supported on $HK$. ByTheorem \ref{thm:trick} we can 
compute a general $G$-DFT at the cost of repeating these two
steps $O(|G|\log|G|/|HK|)$ many
times.
\end{proof}

\subsubsection{Triple subgroup structure in finite groups}

\begin{figure}\begin{center}
\begin{tabular}{|l|l|l|l|}
\hline
Name & Family & $|W|$ & $|G|$ \\
\hline
Chevalley & $A_\ell(q)$ & $(\ell + 1)!$ & $q^{\Theta(\ell^2)}$ \\
&$B_\ell(q)$ & $2^\ell\ell!$ & $q^{\Theta(\ell^2)}$ \\
&$C_\ell(q)$ & $2^\ell\ell!$ & $q^{\Theta(\ell^2)}$ \\
&$D_\ell(q)$ & $2^{\ell-1}\ell!$ & $q^{\Theta(\ell^2)}$ \\
\hline
Exceptional & $E_6(q)$ & $O(1)$ & $q^{\Theta(1)}$ \\
Chevalley & $E_7(q)$ & $O(1)$ & $q^{\Theta(1)}$ \\
& $E_8(q)$ & $O(1)$ & $q^{\Theta(1)}$ \\
& $F_4(q)$ & $O(1)$ & $q^{\Theta(1)}$ \\
& $G_2(q)$ & $O(1)$ & $q^{\Theta(1)}$ \\
\hline
Steinberg & ${}^2A_\ell(q^2)$ & $2^{\lceil \ell/2 \rceil}\lceil \ell/2\rceil!$ & $q^{\Theta(\ell^2)}$
\\
& ${}^2D_\ell(q^2)$ & $2^{\ell-1}(\ell-1)!$ & $q^{\Theta(\ell^2)}$\\
& ${}^2E_6(q^2)$ &  $O(1)$ & $q^{\Theta(1)}$ \\
&${}^3D_4(q^3)$ &  $O(1)$ & $q^{\Theta(1)}$\\
\hline
Suzuki & ${}^2B_2(q)$, $q = 2^{2n+1}$  &  $O(1)$ & $q^{\Theta(1)}$ \\
\hline
Ree & ${}^2F_4(q)$, $q = 3^{2n+1}$  &  $O(1)$ & $q^{\Theta(1)}$ \\
& ${}^2G_2(q)$, $q = 3^{2n+1}$  &  $O(1)$ & $q^{\Theta(1)}$ \\
\hline
\end{tabular}
\end{center}
\caption{Families of finite groups $G$ of Lie type, together with
  the size of their associated Weyl group $W$. These include all
  simple finite groups other than cyclic groups, the alternating groups, the 26
  sporadic groups, and the Tits group. See \cite{lev,wiki, carter} for
  sources. The Suzuki, Steinberg and Ree families are also called the
  {\em twisted Chevalley} groups.}
\label{table:lie-type}
\end{figure}

Our main structural theorem on finite groups is the following
\begin{theorem}
\label{thm:triple-subgroup-structure}
There exists a monotone increasing function $f(x) \le 2^{c\sqrt{\log
    x}\log\log x}$ for a universal constant $c \ge 1$, such that, for
every nontrivial finite group $G$ one of the following holds
\begin{enumerate}
\item $G$ has a (possibly trivial) normal subgroup $N$ and $G/N$ is cyclic of
  prime order, or
\item $G$ has a (possibly trivial) normal subgroup $N$ and $G/N$ has proper
  subgroups $X, Y$ with $X \cap Y = \{1\}$ and for which $|X||N||Y| \ge |G|/f(G)$.
\end{enumerate}
\end{theorem}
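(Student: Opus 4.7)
I would take $N$ to be a maximal proper normal subgroup of $G$, so that the quotient $G/N$ is a finite simple group. If $G/N$ is abelian it is cyclic of prime order and we land directly in case~(1). Otherwise $G/N$ is a non-abelian finite simple group $\Gamma$, and since $|G|=|N|\cdot|G/N|$, case~(2) reduces to exhibiting, inside $\Gamma$, proper subgroups $X,Y$ with $X\cap Y=\{1\}$ and $|X||Y|\ge |\Gamma|/f(|\Gamma|)$. At this point I would invoke the classification of finite simple groups and dispatch the alternating groups, the groups of Lie type, and the sporadic groups separately.

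For alternating groups $\Gamma = A_n$ (with $n\ge 5$), take $X = A_{n-1}$, the stabilizer of the point~$n$, and $Y=\langle\tau\rangle$, where $\tau\in A_n$ is chosen so that no nontrivial power of $\tau$ fixes $n$. When $n$ is odd the $n$-cycle $(1,2,\dots,n)$ works. When $n$ is even the element $(1,2)(3,4,\dots,n)$ is a product of cycles whose sign is even, hence lies in $A_n$, has order $n-2$, and sends $n$ to $3$ under every nontrivial power. In both cases $X\cap Y = \{1\}$ and $|X|\,|Y|\ge |A_n|/2$, so a constant suffices.

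For finite groups of Lie type, let $B=TU$ be a Borel subgroup and $U^-$ the unipotent radical of the opposite Borel $B^-$. Take $X=B$ and $Y=U^-$. Since $B\cap B^-=T$ and $T\cap U^-=\{1\}$, we have $X\cap Y = \{1\}$. The Bruhat decomposition $\Gamma = \bigsqcup_{w\in W} BwB$ with $|BwB| = q^{\ell(w)}|B|$ yields $|\Gamma|\le |W|\cdot|T|\,|U|^2$, so $|X||Y|=|U|^2|T| \ge |\Gamma|/|W|$. Consulting the table of Weyl group orders (Figure~\ref{table:lie-type}), the classical families have $|W|\le 2^\ell\,\ell!$ while $|\Gamma|=q^{\Theta(\ell^2)}$, forcing $\ell=O(\sqrt{\log|\Gamma|})$ and hence $|W|\le 2^{O(\sqrt{\log|\Gamma|}\,\log\log|\Gamma|)}$; the exceptional and rank-one twisted families have $|W|=O(1)$ outright. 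This is exactly the form of $f$ required.

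The sporadic groups (together with the Tits group, if counted separately) form a finite list, so one may pick any explicit pair of proper subgroups $X,Y$ (for instance cyclic of coprime prime orders) with trivial intersection and absorb the finitely many losses by enlarging the universal constant $c$ in the definition of $f$. The principal obstacle is thus not any single calculation but the reliance on the classification of finite simple groups to enumerate the cases; within each family the per-case analysis is a short application of $BN$-pair structure, cycle structure, or explicit subgroup data.
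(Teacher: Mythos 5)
Your proof takes essentially the same approach as the paper: choose a maximal normal subgroup $N$ so that $G/N$ is simple, dispose of the cyclic-of-prime-order case, and then invoke the Classification to handle alternating groups, groups of Lie type, and sporadic groups separately, with the Weyl-group bound $|X||Y|\ge|\Gamma|/|W|$ driving the Lie-type case. The only minor variations are that for $A_n$ the paper simply takes $Y$ trivial (which already suffices since $n\le f(|A_n|)$), and for Lie type the paper selects a single Weyl element $w$ by averaging over the unique $bn_wu_w$ decomposition (Carter, Cor.~8.4.4 and 13.5.3) and sets $Y=n_wU_w^-n_w^{-1}$, rather than using $Y=U^-$ and summing over all Bruhat cells; both give the same conclusion.
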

To connect this theorem to our usage in the previous sections, think of
$H$ as being the subgroup $\overline{X}N$ and $K$ as being the
subgroup $N\overline{Y}$, where $\overline{X}$ and $\overline{Y}$ are
lifts of $X$ and $Y$, respectively, from $G/N$ to $G$. 

\begin{proof}
Let $N$ be a maximal normal subgroup of $G$. Then $G/N$ is simple. If
it is cyclic of prime order, then we are done. Otherwise we have the
following cases, by the Classification Theorem:
\begin{enumerate}
\item $G/N$ is an alternating group $A_n$ for $n \ge 5$. In this case,
  let $X$ be the subgroup of $G/N$ isomorphic to $A_{n-1}$ and $Y$
  the trivial subgroup of $G/N$. 
\item $G/N$ is a finite group of Lie Type. In this case, we refer to
  Table \ref{table:lie-type}, and we have the
  following description from Carter \cite{carter}. For Chevalley and
  exceptional Chevalley groups, we have that there are subgroups $B$
  and $U_w^{-}$ (for each $w$ in the associated Weyl group $W$) so that elements of $G/N$ can be
  expressed {\em uniquely} as $bn_wu_w$, where $b \in B$, $n_w$ is a
  lift of $ w\in W$ to $G$, and $u_w \in U_w^{-}$ (see Corollary
  8.4.4 in Carter \cite{carter}). Uniqueness implies that the conjugate subgroup
  $n_wU_w^{-}n_w^{-1}$ has trivial intersection with $B$; also,
  by an averaging argument, there exists $w \in W$ for which
  $|Bn_wU_w^{-}n_w^{-1}| \ge |G/N|/|W|$. We take $X = B$ and $Y =
  n_wU_w^{-}n_w^{-1}$. For twisted Chevalley groups, we have an
  identical situation (see Corollary
  13.5.3 in Carter \cite{carter}), with subgroup $B$ replaced by $B^1$ and
  subgroup $U_w^{-}$ replaced by $(U_w^{-})^{1}$ (in Carter's
  notation).  Again by an averaging argument there exists $w \in W$ for which
  $|B^{1}n_w(U_w^{-})^{1}n_w^{-1}| \ge |G/N|/|W|$, and subgroups
  $B^{1}$ and $n_w(U_w^{-})^{1}n_w^{-1}$ have trivial intersection; so
  we take them as our $X$ and $Y$, respectively. Finally we verify from Table
  \ref{table:lie-type} that in all cases we have $f(|G/N|) \ge
  |W|$. Thus \[|X||N||Y| \ge |N||G/N|/|W| \ge |N||G/N|/f(|G/N|) \ge
    |G|/f(|G|)\]
where we used the fact that $f$ is increasing.

\item $G/N$ is a one of the 26 sporadic groups or the Tits group. In
  this case, we can take $X = Y = \{1\}$, by choosing $c$ in the
  definition of $f(x)$ sufficiently large. 
\end{enumerate}
\end{proof}

\subsubsection{Putting it together}

Using the structural theorem and the new triple-subgroup reduction recursively, we obtain our final result:

\begin{theorem}[main]
For any finite group $G$, there is an arithmetic algorithm computing
generalized DFTs with respect to $G$, using $O(|G|^{\omega/2 +
  \epsilon})$ operations, for any $\epsilon > 0$. 
\end{theorem}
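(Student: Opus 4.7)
The plan is strong induction on $|G|$, combining Theorem \ref{thm:triple-subgroup-structure} (which identifies, for every finite $G$, either a prime-index normal subgroup or the configuration required for the triple-subgroup reduction) with Corollary \ref{cor:prime-index} and Theorem \ref{thm:triple-subgroup}. Writing $T(G)$ for the operation count of the $G$-DFT, I would actually prove the stronger statement ``for every $\delta>0$ there is $c_\delta$ with $T(G)\le c_\delta|G|^{\omega/2+\delta}$ for all finite $G$'', which lets the inductive hypothesis be invoked at any chosen $\delta$ on smaller subgroups.

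In Case 1 of the structural theorem ($G/N$ cyclic of prime order $p$), Corollary \ref{cor:prime-index} yields $T(G)\le p\,T(N)+O(|G|^{\omega/2+\epsilon})$ with $|N|=|G|/p$. Invoking the inductive bound gives $T(G)\le(c_\epsilon p^{1-\omega/2-\epsilon}+C)|G|^{\omega/2+\epsilon}$; since $p\ge 2$ and $\omega/2+\epsilon>1$, the factor $\theta=p^{1-\omega/2-\epsilon}<1$, and a standard geometric-series argument closes this step with $c_\epsilon\ge C/(1-\theta)$. No slack in the exponent is consumed here.

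In Case 2, set $H=\overline{X}N$ and $K=N\overline{Y}$; these are proper subgroups of $G$ (since $X,Y$ are proper in $G/N$), $H\cap K=N$ (from $X\cap Y=\{1\}$ in $G/N$), and $|HK|=|X||N||Y|\ge|G|/f(G)$. Theorem \ref{thm:triple-subgroup} then bounds the cost by
\[O(f(G)\log|G|)\bigl[|Y|\,T(H)+O(|X||Y|\log|X|)(T(N)+|N|)+O(|X|\log|X|)\,T(K)+O(|G|^{\omega/2+\epsilon})\bigr].\]
Since $|H||Y|,\,|X||N||Y|,\,|X||K|\le|G|$ and $\omega/2+\delta\ge 1$ implies $|Y|\le|Y|^{\omega/2+\delta}$ (and similarly for $|X|$), each inductive contribution $|Y||H|^{\omega/2+\delta}$, $|X||Y||N|^{\omega/2+\delta}$, $|X||K|^{\omega/2+\delta}$ is at most $|G|^{\omega/2+\delta}$. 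Invoking the inductive hypothesis with $\delta=\epsilon/2$ then yields $T(G)\le c_{\epsilon/2}\,O(f(G)\log^{O(1)}|G|)\,|G|^{\omega/2+\epsilon/2}$. As $f(x)\le 2^{c\sqrt{\log x}\log\log x}$ the prefactor is $|G|^{o(1)}$, bounded above by $|G|^{\epsilon/2}$ for $|G|\ge N_\epsilon$; together this gives $T(G)\le c_{\epsilon/2}|G|^{\omega/2+\epsilon}$, so choosing $c_\epsilon\ge c_{\epsilon/2}$ completes the inductive step. Groups with $|G|<N_\epsilon$ form a finite list of base cases absorbed into $c_\epsilon$ via the trivial $O(|G|^2)$ algorithm.

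The principal technical hurdle is organizing the ``exponent-shrinking'' hypothesis so that the family $\{c_\delta\}_{\delta>0}$ is simultaneously well-defined: one defines $c_\delta$ to dominate $c_{\delta/2}$ together with the trivial-bound constants on all base cases below $N_\delta$, starting from large $\delta$ (where the trivial bound alone suffices, for instance $\delta\ge 2-\omega/2$) and iterating downward. A secondary care point arises in the degenerate sub-case of Case 2 with $Y=\{1\}$, where the triple-subgroup reduction acquires a dominant multiplicative prefactor $O([G:H]\log|G|)$ on $T(H)$; the same $|G|^{\epsilon/2}$ slack absorbs this factor, but for a tighter accounting one may substitute the single-subgroup reduction (Theorem \ref{thm:single-subgroup}) without affecting the asymptotic.
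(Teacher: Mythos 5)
Your overall strategy matches the paper's: invoke Theorem~\ref{thm:triple-subgroup-structure}, handle the cyclic-of-prime-order case via Corollary~\ref{cor:prime-index}, handle the remaining case via the triple-subgroup reduction (Theorem~\ref{thm:triple-subgroup}), and close by strong induction on $|G|$. Your algebraic estimates --- $|Y||H|^{\omega/2+\delta}\le|G|^{\omega/2+\delta}$, $|X||Y||N|^{\omega/2+\delta}\le|G|^{\omega/2+\delta}$, $|X||K|^{\omega/2+\delta}\le|G|^{\omega/2+\delta}$ --- are correct. But the inductive bookkeeping is genuinely broken, and you yourself flag it as the ``principal technical hurdle''; the scheme you sketch to resolve it does not work.

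You invoke the inductive hypothesis at exponent $\delta=\epsilon/2$ on $H$, $K$, $N$ and conclude the bound at exponent $\epsilon$ on $G$ with $c_\epsilon\ge c_{\epsilon/2}$. To use this you must have $c_{\epsilon/2}$ already defined, which requires $c_{\epsilon/4}$, which requires $c_{\epsilon/8}$, and so on: the regress $\epsilon\to\epsilon/2\to\epsilon/4\to\cdots$ never terminates. The proposed fix --- ``one defines $c_\delta$ to dominate $c_{\delta/2}$ \dots starting from large $\delta$ \dots and iterating downward'' --- goes the wrong direction: each $c_\delta$ depends on $c_{\delta/2}$, which lies strictly \emph{below} it, so no iteration starting from $\delta\ge 2-\omega/2$ ever reaches any $c_\delta$ with $\delta<2-\omega/2$. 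Worse, the two constraints on the family $\{c_\delta\}$ are in conflict: the base cases below $N_\epsilon$ force $c_\epsilon\gtrsim N_\epsilon^{\,2-\omega/2-\epsilon}$, and since $N_\epsilon\to\infty$ as $\epsilon\to 0$ (it is roughly $2^{\Theta((1/\epsilon)^2)}$ given $f(x)\le 2^{c\sqrt{\log x}\log\log x}$), this forces $c_\delta$ to be large for small $\delta$; but the inductive step forces $c_\epsilon\ge c_{\epsilon/2}$, i.e.\ $c_\delta$ must be \emph{nondecreasing} in $\delta$. There is no choice of $\{c_\delta\}$ satisfying both.

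The paper sidesteps this by keeping $\epsilon$ fixed throughout the induction. It proves the slightly stronger invariant $T(n)\le C_\epsilon n^{\omega/2+\epsilon}\log n$, and --- crucially --- it inserts a \emph{preliminary case}: if $G$ has a proper subgroup $H$ with $|H|>|G|^{1-\epsilon/2}$, it applies the single-subgroup reduction (Theorem~\ref{thm:single-subgroup}) and recurses on $H$. Only when this fails does it pass to Theorem~\ref{thm:triple-subgroup-structure}. This check is what ensures that when the triple-subgroup reduction is used, both $|H/N|$ and $|K/N|$ are at least $|G|^{\epsilon/2}/f(|G|)$; those lower bounds on $|H/N|$ and $|K/N|$ are exactly what lets the overhead factors $|K/N|$, $A_2|H/N|\log|H/N|$, and $A_2 f(|G|)\log|G|$ be absorbed into the savings $|H/N|^{\omega/2+\epsilon-1}$, $|K/N|^{\omega/2+\epsilon-1}$ at a fixed exponent. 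Your proposal mentions Theorem~\ref{thm:single-subgroup} only as an optional refinement in the degenerate $Y=\{1\}$ sub-case, but it is needed in general, and it is the mechanism that makes the fixed-$\epsilon$ induction close. As written, your argument does not yield a valid proof.
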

\begin{proof}
Fix an arbitrary $\epsilon > 0$. Consider the following recursive
algorithm to compute a $G$-DFT. If $G$ is trivial then computing a
$G$-DFT is as well. If $G$ has a proper
subgroup $H$ of order larger than $|G|^{1 - \epsilon/2}$ then we apply Theorem \ref{thm:single-subgroup} to
compute a $G$-DFT via several $H$-DFTs. Otherwise, applying
Theorem \ref{thm:triple-subgroup-structure}, we obtain a (possibly
trivial) normal subgroup $N$, and two proper subgroups of $G$, $H$ and
$K$, with $N = H \cap K$. If $G/N$ is cyclic of prime order, we apply
Corollary \ref{cor:prime-index} to compute a $G$-DFT via several
$N$-DFTs. Otherwise, we apply Theorem \ref{thm:triple-subgroup} to
compute a $G$-DFT via several $H$-DFTs, $K$-DFTS, and inverse
$N$-DFTs. 

Let $T(n)$ denote an upper bound on the operation count of this
recursive algorithm for any group of order $n$. We will prove by
induction on $n$, that there is a universal constant $C_\epsilon$ for which 
\[T(n) \le C_\epsilon n^{\omega/2 + \epsilon}\log n.\]

In the case that we apply Theorem \ref{thm:single-subgroup}, the cost
is the cost of $[G:H]$ many $H$-DFTs plus $A_0[G:H]|G|^{\omega/2 +
  \epsilon/2}$ operations (where $A_0$ is the constant hidden in the
big-oh), and by induction this is at most:
\[C_\epsilon [G:H]|H|^{\omega/2 + \epsilon} \log|H| + A_0[G:H]|G|^{\omega/2 +
  \epsilon/2} \le C_\epsilon |G|^{\omega/2 + \epsilon} (\log
|G| - 1) + A_0|G|^{\omega/2 +
  \epsilon}  \]
which is indeed less than $C_\epsilon |G|^{\omega/2 + \epsilon}\log
|G|$ provided $C_\epsilon \ge A_0$.

In the case that we apply Corollary \ref{cor:prime-index}, our cost is
$p$ many $N$-DFTs, plus $A_1|G|^{\omega/2 + \epsilon})$ operations,
which by induction is at most 
\[C_\epsilon p(|G|/p)^{\omega/2 + \epsilon} \log(|G|/p) + A_1|G|^{\omega/2 +
  \epsilon} \le  C_\epsilon|G|^{\omega/2 + \epsilon}(\log
|G|-1) + A_1|G|^{\omega/2 +
  \epsilon},\]
which is indeed less than $C_\epsilon |G|^{\omega/2 + \epsilon}\log
|G|$ provided $C_\epsilon \ge A_1$.

Finally, in the case that we apply Theorem \ref{thm:triple-subgroup},
let $A_2$ be the maximum of the constants hidden in the big-ohs in the
statement of the Theorem (applied with $\epsilon/2$). Note that by selecting $C_\epsilon$ sufficiently large, we may
assume that $G$ is sufficiently large, so that two inequalities hold:
\begin{eqnarray*}
  A_2 |H/N| \log |H/N| & \le &  \frac{|H/N|^{\omega/2 +
                               \epsilon}}{4A_2f(|G|)\log |G|}\\
  |K/N| & \le &  \frac{|K/N|^{\omega/2 + \epsilon}}{4A_2f(|G|)\log|G|}\\
\end{eqnarray*}
and this is possible because Theorem
\ref{thm:triple-subgroup-structure} implies that $|H/N|$ (resp. $|K/N|$) 
are at least $|G|^{\epsilon/2}/f(|G|)$, as otherwise $|K|$ (resp. $|H|$) would exceed $|G|^{1 -
  \epsilon/2}$. Our cost is
$|K/N|$ many $H$-DFTs, $A_2|H||K|/|N|^2\log |H/N|$ many inverse
$N$-DFTs, $A_2|H/N|\log |H/N|$ many $K$-DFTs, plus
$A_2|G|^{\omega/2 + \epsilon/2}$ operations, all repeated
$A_2|G|\log|G|/|HK| \le A_2 f(|G|)\log|G|$ times. By induction, this is at most 
\begin{eqnarray*}
\left (C_\epsilon|K/N| |H|^{\omega/2 + \epsilon}\log |H| + 
C_\epsilon A_2|H||K|/|N|^2 \log |H/N| |N|^{\omega/2 +   
  \epsilon}\log |N| \right . & &  \\
\left . C_\epsilon A_2|H/N|\log |H/N| |K|^{\omega/2 +
  \epsilon}\log |K| +  
  A_2|G|^{\omega/2 + \epsilon/2} \right )\cdot A_2f(|G|)\log|G|
\end{eqnarray*}
Now the first three summands are each at most
\[\frac{C_\epsilon |G|^{\omega/2 + \epsilon}\log |G|}{4A_2f(|G|)\log
  |G|}\]
as is the fourth summand provided $|G|$ is sufficiently large. Thus the entire expression
is at most $C_\epsilon|G|^{\omega/2 + \epsilon}\log |G|$,
as required. This completes the proof.
\end{proof}

\section{Open problems}

Is there a proof of Theorem \ref{thm:triple-subgroup-structure} that does not need the
Classification Theorem? 
A second question is whether the dependence on $\omega$ can be removed. Alternatively, can
one show that a running time that depends on $\omega$ is necessary by
showing that an exponent-one DFT for a certain family of groups would
imply $\omega = 2$?

\paragraph{Acknowledgements.} We thank Jonah Blasiak, Tom Church, and
Henry Cohn 
for useful discussions during an AIM SQuaRE meeting. 

\bibliographystyle{alpha}

\begin{thebibliography}{MRW16}

\bibitem[Bau91]{Baum}
Ulrich Baum.
\newblock Existence and efficient construction of fast {Fourier} transforms on
  supersolvable groups.
\newblock {\em computational complexity}, 1(3):235--256, Sep 1991.

\bibitem[BCS97]{BCS}
P.~{B\"urgisser}, M.~Clausen, and M.~A. Shokrollahi.
\newblock {\em Algebraic Complexity Theory}, volume 315 of {\em Grundlehren der
  mathematischen Wissenschaften}.
\newblock Springer-Verlag, 1997.

\bibitem[Bet84]{beth}
Thomas Beth.
\newblock {\em Verfahren der schnellen Fourier-Transformation}.
\newblock Teubner, 1984.

\bibitem[Car89]{carter}
Roger~W Carter.
\newblock {\em Simple groups of Lie type}, volume~22.
\newblock John Wiley \& Sons, 1989.

\bibitem[CB93]{CB}
Michael Clausen and Ulrich Baum.
\newblock {\em Fast Fourier transforms}.
\newblock Wissenschaftsverlag, 1993.

\bibitem[Cla89]{clausenfast}
Michael Clausen.
\newblock Fast generalized {Fourier} transforms.
\newblock {\em Theoretical Computer Science}, 67(1):55--63, 1989.

\bibitem[CT65]{CT}
James~W. Cooley and John~W. Tukey.
\newblock An algorithm for the machine calculation of complex fourier series.
\newblock {\em Mathematics of Computation}, 19(90):297--301, 1965.

\bibitem[HJ91]{HJ}
Roger~A. Horn and Charles~R. Johnson.
\newblock {\em Topics in Matrix Analysis}.
\newblock Cambridge University Press, 1991.

\bibitem[HU18a]{HU18}
Chloe~Ching{-}Yun Hsu and Chris Umans.
\newblock A fast generalized {DFT} for finite groups of lie type.
\newblock In Artur Czumaj, editor, {\em Proceedings of the Twenty-Ninth Annual
  {ACM-SIAM} Symposium on Discrete Algorithms, {SODA} 2018, New Orleans, LA,
  USA, January 7-10, 2018}, pages 1047--1059. {SIAM}, 2018.

\bibitem[HU18b]{HU-full}
Chloe~Ching{-}Yun Hsu and Chris Umans.
\newblock A new algorithm for fast generalized {DFT}s.
\newblock {\em CoRR}, abs/1707.00349v3, 2018.
\newblock Full version of \cite{HU18}.

\bibitem[Lev92]{lev}
Arieh Lev.
\newblock On large subgroups of finite groups.
\newblock {\em Journal of Algebra}, 152(2):434--438, 1992.

\bibitem[Mas98]{Maslen}
David~Keith Maslen.
\newblock The efficient computation of {Fourier} transforms on the symmetric
  group.
\newblock {\em Math. Comput.}, 67(223):1121--1147, 1998.

\bibitem[MR97a]{MRsurvey2}
David Maslen and Daniel Rockmore.
\newblock Separation of variables and the computation of {Fourier} transforms
  on finite groups, {I}.
\newblock {\em Journal of the American Mathematical Society}, 10(1):169--214,
  1997.

\bibitem[MR97b]{MRsurvey}
David~K Maslen and Daniel~N Rockmore.
\newblock Generalized {FFT}s -- a survey of some recent results.
\newblock In {\em Groups and Computation II}, volume~28, pages 183--287.
  American Mathematical Soc., 1997.

\bibitem[MRW16]{MRW}
David Maslen, Daniel~N Rockmore, and Sarah Wolff.
\newblock The efficient computation of {Fourier} transforms on semisimple
  algebras.
\newblock {\em arXiv preprint arXiv:1609.02634}, 2016.
\newblock To appear in Journal of Fourier Analysis and Applications.

\bibitem[Roc95]{wreath}
Daniel~N. Rockmore.
\newblock Fast {Fourier} transforms for wreath products.
\newblock {\em Applied and Computational Harmonic Analysis}, 2(3):279 -- 292,
  1995.

\bibitem[Roc02]{Rrecent}
Daniel~N Rockmore.
\newblock Recent progress and applications in group {FFT}s.
\newblock In {\em Signals, Systems and Computers, 2002. Conference Record of
  the Thirty-Sixth Asilomar Conference on}, volume~1, pages 773--777. IEEE,
  2002.

\bibitem[Wik17]{wiki}
Wikipedia.
\newblock List of finite simple groups --- wikipedia{,} the free encyclopedia,
  2017.
\newblock [Online; accessed 30-June-2017].

\end{thebibliography}

\end{document}